\newtheorem{theorem}{Theorem}[section]
\newtheorem{corollary}[theorem]{Corollary}
\newtheorem{lemma}[theorem]{Lemma}
\newtheorem{defin}[theorem]{Definition}
\newenvironment{definition}{\begin{defin}\normalfont\quad}{\end{defin}}
\newtheorem{examp}[theorem]{Example}
\newtheorem{rema}[theorem]{Remark}
\newtheorem{prob}[theorem]{Problem}
\newenvironment{problem}{\begin{prob}\normalfont}{\end{prob}}
\numberwithin{equation}{section}
\newcommand{\bt}{\begin{thm}}
\newcommand{\et}{\end{thm}}
\newcommand{\bp}{\begin{proof}}
\newcommand{\ep}{\end{proof}}
\newcommand{\bprop}{\begin{prop}}
\newcommand{\eprop}{\end{prop}}
\newcommand{\bl}{\begin{lemma}}
\newcommand{\el}{\end{lemma}}
\newcommand{\bc}{\begin{corollary}}
\newcommand{\ec}{\end{corollary}}
\newcommand{\Z}{\mathbb{Z}}
\newcommand{\be}{\begin{enumerate}}
\newcommand{\ee}{\end{enumerate}}
\newcommand{\OMIT}[1]{}
\begin{document}

\title{Explicit Formulas for the Weight Enumerators \\ of Some Classes of Deletion Correcting Codes}

\author{Khodakhast~Bibak and~Olgica~Milenkovic,~\IEEEmembership{Fellow,~IEEE}%

\thanks{An extended abstract of this paper was presented at ISIT 2018, Colorado, USA~\cite{BM}. K.~Bibak is with the Department of Computer Science and Software Engineering, Miami University, Oxford, Ohio, 45056, USA. Email: bibakk@miamioh.edu. O.~Milenkovic is with the Coordinated Science Laboratory, University of Illinois at Urbana-Champaign, Urbana, Illinois, 61801, USA. Email: milenkov@illinois.edu.}}

\maketitle

\begin{abstract}
We introduce a general class of codes which includes several well-known classes of deletion/insertion correcting codes as special cases. For example, the Helberg code, the Levenshtein code, the Varshamov--Tenengolts code, and most variants of these codes including most of those which have been recently used in studying DNA-based data storage systems are all special cases of our code. Then, using a number theoretic method, we give an explicit formula for the weight enumerator of our code which in turn gives explicit formulas for the weight enumerators and so the sizes of all the aforementioned codes. We also obtain the size of the Shifted Varshamov--Tenengolts code. Another application which automatically follows from our result is an explicit formula for the number of binary solutions of an \textit{arbitrary} linear congruence which, to the best of our knowledge, is the first result of its kind in the literature and might be also of independent interest. Our general result might have more applications/implications in information theory, computer science, and mathematics.
\end{abstract}

\begin{IEEEkeywords}
Binary solution, BLCC, weight enumerator, deletion correcting code, linear congruence. 
\end{IEEEkeywords}

\IEEEpeerreviewmaketitle

\section{Introduction}\label{Sec_1}

\IEEEPARstart{D}{eletions} or insertions can occur in many systems; for example, they can occur in some communication and storage channels, in biological sequences, etc. Therefore, studying deletion/insertion correcting codes may lead to important insight into genetic processes and into many communication problems. Deletion correcting codes have been the subject of intense research for more than fifty years~\cite{MEBT, MITZ, SLO}, with recent results settling long standing open problems regarding constructions of multiple deletion correcting codes with low redundancy~\cite{BGH17,BGZ17}. Nevertheless, our understanding about these codes and channels with this type of errors is still very limited and many open problems in the area remain, especially when considering constructions of deletion correcting codes that satisfy additional constraints, such as weight or parity constraints. Examples include codes in the Damerau distance~\cite{GYM}, based on single deletion correcting codes with even weight, and Shifted Varshamov--Tenengolts codes~\cite{SWGY} used for burst deletion correction. In such settings, one important question is to determine the weight enumerators of the component deletion correcting codes in order to estimate the size~\cite{CUKI, KUKI} of the weight-constrained deletion correcting codes. The component deletion correcting code is frequently defined in terms of a linear congruence for which the number of solutions of some fixed weight determines the size of the constrained code.

Here, we introduce a general class of codes which includes several well-known classes of deletion/insertion correcting codes as special cases. Then, using a number theoretic method, we give an explicit formula for the weight enumerator of our code which in turn gives explicit formulas for the weight enumerators and for the sizes of the aforedescribed codes (see also \cite{CUKI, KUKI} for some general upper bounds for the size of deletion correcting codes). Our initial motivation for studying this problem comes from number theory, and pertains to a possible $q$-ary generalization of Lehmer's Theorem (see Section~\ref{Sec_2}). 

Before we proceed with our technical exposition, we review some well-known classes of deletion correcting codes.

Throughout the paper, we let $\Z_n = \lbrace 0, \ldots, n-1 \rbrace$. Varshamov and Tenengolts~\cite{VATE} in 1965 introduced an important class of codes, known as the Varshamov--Tenengolts codes (henceforth, VT-codes), and proved that these codes are capable of correcting single asymmetric errors on a $Z$-channel.  

\begin{definition}
Let $n$ be a positive integer and $b\in \Z_{n+1}$. The \textit{Varshamov--Tenengolts code} $VT_b(n)$ is the set of all binary $n$-tuples $\langle s_1,\ldots,s_n \rangle$ such that 
$$
\sum_{i=1}^{n}is_i \equiv b \pmod{n+1}.
$$ 
\end{definition}

A generalization of VT-codes to Abelian groups where the code length is one less than the order of the group was proposed by Constantin and Rao \cite{CORA}; the size and weight distribution of the latter codes were studied in \cite{DEPI, HEKL, KL, MCRO}. Despite the fact that the VT codes can correct only a single deletion~\cite{LEV1}, the codes and their variants have found many applications, including DNA-based data storage~\cite{GYM, LSWY} and distributed message synchronization~\cite{RAMJI,LARA}.

Levenshtein~\cite{LEV1} proved that any code that can correct $s$ deletions (or $s$ insertions) can also correct a total of $s$ deletions and insertions. In the same paper, he also proposed the following important generalization of VT codes.

\begin{definition}
Let $n$, $k$ be positive integers and $b\in \Z_n$. The \textit{Levenshtein code} $L_b(k,n)$ is the set of all binary $k$-tuples $\langle s_1,\ldots,s_k \rangle$ such that 
$$
\sum_{i=1}^{k}is_i \equiv b \pmod{n}.
$$ 
\end{definition}

By giving an elegant decoding algorithm, Levenshtein \cite{LEV1} showed that if $n\geq k+1$, then the code $L_b(k,n)$ can correct a single deletion (and consequently, can correct a single insertion). Furthermore, Levenshtein~\cite{LEV1} proved that if $n\geq 2k$ then the code $L_b(k,n)$ can correct either a single deletion/insertion error or a single substitution error. The Levenshtein code has found many interesting applications and is considered to be one of the most important examples of deletion/insertion correcting codes.

Motivated by applications in burst of deletion correction, a variant of the Levenshtein code was introduced in \cite{SWGY} under the name of Shifted Varshamov--Tenengolts codes. Gabrys et al.~\cite{GYM} used Shifted VT-codes to construct codes in the Damerau distance. Shifted VT-codes combine a linear congruence constraint with a parity constraint, as stated in the next definition.

\begin{definition}
Let $n$, $k$ be positive integers, $b\in \Z_n$, and $r \in \{0,1\}$. The \textit{Shifted Varshamov--Tenengolts code} $SVT_{b,r}(k,n)$ is the set of all binary $k$-tuples $\langle s_1,\ldots,s_k \rangle$ such that 
$$
\sum_{i=1}^{k}is_i \equiv b \pmod{n}, \;\;\;\;\; \sum_{i=1}^{k}s_i \equiv r \pmod{2}.  
$$ 
\end{definition}

The reason why these codes are called ``shifted" is that they can correct a single deletion where the location of the deleted bit is known to be within certain consecutive positions. A variation of the Shifted VT-codes appeared in~\cite{CKK1, CKK2}.

Helberg and Ferreira~\cite{HEFE} introduced a generalization of the Levenshtein code, referred to as the Helberg code, by replacing the coefficients (weights) $i$ with modified versions of the Fibonacci numbers.

\begin{definition}
Let $s$, $k$ be positive integers. The \textit{Helberg code} $H_b(k,s)$ is the set of all binary $k$-tuples $\langle s_1,\ldots,s_k \rangle$ such that 
$$
\sum_{i=1}^{k}v_i s_i \equiv b \pmod{n},
$$ 
where $v_i=0$, for $i\leq 0$, $v_i=1+\sum_{j=1}^s v_{i-j}$, for $i\geq 1$, $n=v_{k+1}$, and $b\in \Z_n$. Note that the multipliers $v_i$ depend on $s$, and $n$ depends on both $s$ and $k$.
\end{definition}

Clearly, the Helberg code with $s=1$ coincides with the VT code. Helberg and Ferreira~\cite{HEFE} gave numerical values for the maximum cardinality of this code for some special parameter choices. Abdel-Ghaffar et al. ~\cite{APFC} proved that the Helberg code can correct multiple deletion/insertion errors (see also~\cite{HAG} for a short proof of this result). Furthermore, multiple deletion correcting codes over nonbinary alphabets generalizing the Helberg code were recently proposed by Le and Nguyen~\cite{LENG}. The Helberg code constraint was combined with the parity constraint of Shifted VT-codes for the purpose of devising special types of DNA-based data storage codes in~\cite{GYM}. 

We now introduce our general code family which includes the above codes as special cases.

\begin{definition}
Let $n$, $k$ be positive integers, $a_1,\ldots,a_k\in \Z$, and $b\in \Z_n$. We define the \textit{Binary Linear Congruence Code} (BLCC) ${\cal C}$ as the set of all binary $k$-tuples $\langle c_1,\ldots,c_k \rangle$ such that 
$$
a_1c_1+\cdots +a_kc_k\equiv b \pmod{n}.
$$ 
\end{definition} 

The \textit{Hamming weight} of a string $\mathbf{s}$ over an alphabet, denoted by $w(\mathbf{s})$, is the number of non-zero symbols in $\mathbf{s}$. Equivalently, the Hamming weight of a string is the Hamming distance between that string and the all-zero string of the same length. The weight enumerator of a code is defined as follows.

\begin{definition}
Let $k$ be a positive integer, $\mathbb{F}$ be a finite field, and let $C\subseteq \mathbb{F}^k$. Then the \textit{weight enumerator} of the code $C$ is defined as
$$
W_C(z)= \mathlarger{\sum}_{\mathbf{c} \in C}z^{w(\mathbf{c})}= \mathlarger{\sum}_{t=0}^{k}N_tz^t,
$$
where $w(\mathbf{c})$ is the Hamming weight of $\mathbf{c}$, and $N_t$ is the number of codewords in $C$ of Hamming weight $t$. Also, the \textit{homogeneous weight enumerator} of the code $C$ is defined as
$$
W_C(x,y)= y^kW_C\left(\frac{x}{y}\right)= \mathlarger{\sum}_{t=0}^{k}N_tx^ty^{k-t}.
$$
Clearly, by setting $z=1$ in the weight enumerator (or $x=y=1$ in the homogeneous weight enumerator) we obtain the size of code $C$. 
\end{definition}

What can we say about the size, or more generally, about the weight enumerator of the Binary Linear Congruence Code (BLCC) ${\cal C}$? In the next section, we review linear congruences, exponential sums and in particular, Ramanujan sums. Then, in Section~\ref{Sec_3}, we give an explicit formula for the weight enumerator of ${\cal C}$. In Section~\ref{Sec_4}, we derive explicit formulas for the weight enumerators and for the sizes of the previously described deletion correcting codes. We also obtain a formula for the size of the Shifted Varshamov--Tenengolts codes.

\section{Linear congruences and Ramanujan sums}\label{Sec_2}

Let $a_1,\ldots,a_k,b,n\in \Z$, $n\geq 1$. Throughout the paper, an ordered $k$-tuple of integers is denoted by $\langle a_1,\ldots,a_k\rangle$. Also, by $\mathbf{x}\cdot\mathbf{y}$ we mean the scalar product of the vectors $\mathbf{x}$ and $\mathbf{y}$. A linear congruence in $k$ unknowns $x_1,\ldots,x_k$ is of the form
\begin{align} \label{cong form}
a_1x_1+\cdots +a_kx_k\equiv b \pmod{n}.
\end{align}
%Linear congruences are ubiquitous in mathematics and computer science. 

A solution of (\ref{cong form}) is an ordered $k$-tuple of integers $\mathbf{x}=\langle x_1,\ldots,x_k \rangle \in \mathbb{Z}_n^k$ that satisfies (\ref{cong form}). The following result, proved by Lehmer~\cite{LEH2}, gives the number of solutions of the above linear congruence.

\begin{theorem}\label{Prop: lin cong}
Let $a_1,\ldots,a_k,b,n\in \Z$, $n\geq 1$. The linear congruence $a_1x_1+\cdots +a_kx_k\equiv b \pmod{n}$ has a solution $\langle x_1,\ldots,x_k \rangle \in \Z_{n}^k$ if and only if $\ell \mid b$, where
$\ell=\gcd(a_1, \ldots, a_k, n)$. Furthermore, if this condition is satisfied, then there are $\ell n^{k-1}$ solutions.
\end{theorem}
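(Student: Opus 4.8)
The plan is to count solutions via the standard exponential-sum (discrete Fourier) identity over $\Z_n$. First I would write the indicator of the congruence $a_1x_1+\cdots+a_kx_k\equiv b\pmod n$ using the orthogonality relation $\frac1n\sum_{j=0}^{n-1}e_n(j(a_1x_1+\cdots+a_kx_k-b))=1$ if the congruence holds and $0$ otherwise, where $e_n(t)=\exp(2\pi i t/n)$. Summing over all $\langle x_1,\ldots,x_k\rangle\in\Z_n^k$ and swapping the order of summation, the number of solutions $N$ becomes
$$
N=\frac1n\sum_{j=0}^{n-1}e_n(-jb)\prod_{m=1}^{k}\left(\sum_{x_m=0}^{n-1}e_n(ja_mx_m)\right).
$$

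Next I would evaluate the inner geometric sums: $\sum_{x=0}^{n-1}e_n(ja_mx)$ equals $n$ when $n\mid ja_m$ and $0$ otherwise. Hence the product over $m$ vanishes unless $n\mid ja_m$ for every $m$, i.e. unless $n\mid j\gcd(a_1,\ldots,a_k)$; writing $d=\gcd(a_1,\ldots,a_k)$ this says $j$ is a multiple of $n/\gcd(n,d)=n/\gcd(a_1,\ldots,a_k,n)$. Let $\ell=\gcd(a_1,\ldots,a_k,n)$, so the surviving indices are $j=t\,(n/\ell)$ for $t=0,1,\ldots,\ell-1$, and for each such $j$ the product equals $n^k$. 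Therefore
$$
N=\frac{n^k}{n}\sum_{t=0}^{\ell-1}e_n\!\left(-\,\frac{tn}{\ell}\,b\right)=n^{k-1}\sum_{t=0}^{\ell-1}e_\ell(-tb).
$$
The remaining character sum $\sum_{t=0}^{\ell-1}e_\ell(-tb)$ is again an orthogonality sum: it equals $\ell$ if $\ell\mid b$ and $0$ otherwise. This yields $N=\ell\,n^{k-1}$ when $\ell\mid b$ and $N=0$ otherwise, which simultaneously establishes the solvability criterion and the count, so no separate argument for the ``only if'' direction is needed.

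I expect the only real care-point to be the bookkeeping in the divisibility step: verifying that $n\mid ja_m$ for all $m$ is equivalent to $(n/\ell)\mid j$, which uses $\gcd(n/\ell,\,a_m/\!\gcd(a_m,n)\cdots)$-type reasoning, or more cleanly the identity $\mathrm{lcm}$ of the $n/\gcd(n,a_m)$ equals $n/\gcd(n,a_1,\ldots,a_k)$. An alternative, perhaps cleaner, route avoids Fourier analysis entirely: reduce to one variable by induction on $k$, or directly invoke the structure theorem for the image of the homomorphism $\Z_n^k\to\Z_n$, $\langle x_i\rangle\mapsto\sum a_ix_i$, whose image is the subgroup $\ell\Z_n$ of order $n/\ell$ with every fiber of size $n^k/(n/\ell)=\ell n^{k-1}$; the congruence is solvable iff $b$ lies in this subgroup, i.e. iff $\ell\mid b$. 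Either way the main obstacle is purely the gcd/lcm manipulation; the counting itself is immediate once the image subgroup is identified.
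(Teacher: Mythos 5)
Your argument is correct: the orthogonality identity $\frac1n\sum_{j=0}^{n-1}e_n(jt)=[\,n\mid t\,]$, the evaluation of the inner geometric sums, the reduction of the surviving indices to multiples of $n/\ell$ via $n\mid jd \iff (n/\gcd(n,d))\mid j$ with $d=\gcd(a_1,\ldots,a_k)$, and the final character sum over $\Z_\ell$ all check out, including the degenerate case where some (or all) $a_m$ vanish. The paper itself gives no proof of this statement --- it is quoted as Lehmer's 1913 theorem --- so there is nothing to compare line by line; but your exponential-sum derivation is precisely the technique the authors later use to prove their main result (Theorem~\ref{BLCCWE}), where the sum over $x_m\in\Z_n$ is replaced by a sum over $x_m\in\{0,1\}$, so your proof fits the paper's toolkit exactly. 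Your alternative closing remark --- that the map $\langle x_i\rangle\mapsto\sum a_ix_i$ is a homomorphism $\Z_n^k\to\Z_n$ with image $\ell\Z_n$ and equal fibers of size $\ell n^{k-1}$ --- is an even shorter route and equally valid.
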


Lehmer's Theorem and its variants have been studied extensively and have found intriguing applications in several areas of mathematics, computer science, and physics (see \cite{BKS2, BKS4, BKS7, BKSTT, BKSTT2, COH0, JAWILL} and the references therein).

Now, we pose the following problem that asks for a $q$-ary generalization of Lehmer's Theorem:

\begin{problem}\label{Prob: lin cong}
Let $a_1,\ldots,a_k,b,n,q\in \Z$, $n,q\geq 1$, and $q\leq n$. Give an explicit formula for the number of solutions of the linear congruence $a_1x_1+\cdots +a_kx_k\equiv b \pmod{n}$ with $\langle x_1,\ldots,x_k \rangle \in \Z_{q}^k$.
\end{problem}

Note that we have only changed $\langle x_1,\ldots,x_k \rangle \in \Z_{n}^k$ to $\langle x_1,\ldots,x_k \rangle \in \Z_{q}^k$. For example, when $q=2$, the problem is asking for an explicit formula for the number of binary solutions of an \textit{arbitrary} linear congruence. This is a very natural problem and might lead to interesting applications. In Section~\ref{Sec_3}, we solve the binary version of the above problem as an immediate consequence of our main result.

\begin{rema}
A solution to Problem~\ref{Prob: lin cong} automatically gives the size of a multiple insertion/deletion correcting code recently proposed by Le and Nguyen \cite{LENG} which generalize the Helberg code.
\end{rema} 

Next, we review some properties of exponential sums and in particular, Ramanujan sums. Throughout the paper, we let $e(x)=\exp(2\pi ix)$ denote the complex exponential with period $1$. 

\begin{lemma}\label{lem: sine}
Let $n$ be a positive integer and $x$ be a real number. Then we have
\begin{equation}\label{lem: sineF}
\sum_{m=1}^{n}e(mx) = \begin{cases}
n, \ & \text{ if $x \in \Z$}, \\  \frac{\sin (nx \pi)}{\sin (x \pi)}e\left(\frac{(n+1)x}{2}\right), \ & \text{ if $x \in \mathbb{R} \setminus \Z$}.
\end{cases}
\end{equation}
\end{lemma}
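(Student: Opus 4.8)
The plan is to treat the two cases of the statement separately, the first being trivial and the second being a geometric-series computation followed by a half-angle rewrite.

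First suppose $x\in\Z$. Then for every integer $m$ we have $mx\in\Z$, so $e(mx)=\exp(2\pi i mx)=1$, and hence $\sum_{m=1}^{n}e(mx)=n$, which is the first branch.

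Now suppose $x\in\mathbb{R}\setminus\Z$. Then $e(x)\neq 1$ (since $e(x)=1$ would force $x\in\Z$), so $\sum_{m=1}^{n}e(mx)$ is a finite geometric series with first term $e(x)$ and ratio $e(x)$, giving
$$
\sum_{m=1}^{n}e(mx)=e(x)\,\frac{e(nx)-1}{e(x)-1}.
$$
The next step is to symmetrize numerator and denominator by pulling out half-angle factors: write $e(nx)-1=e\!\left(\tfrac{nx}{2}\right)\bigl(e\!\left(\tfrac{nx}{2}\right)-e\!\left(-\tfrac{nx}{2}\right)\bigr)=e\!\left(\tfrac{nx}{2}\right)\cdot 2i\sin(n\pi x)$, and likewise $e(x)-1=e\!\left(\tfrac{x}{2}\right)\cdot 2i\sin(\pi x)$. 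Since $x\notin\Z$ we have $\sin(\pi x)\neq 0$, so the quotient is well defined, and substituting these two identities yields
$$
\sum_{m=1}^{n}e(mx)=e(x)\cdot e\!\left(\tfrac{nx}{2}-\tfrac{x}{2}\right)\cdot\frac{\sin(n\pi x)}{\sin(\pi x)}=\frac{\sin(n\pi x)}{\sin(\pi x)}\,e\!\left(\tfrac{(n+1)x}{2}\right),
$$
which is the second branch. This completes the proof.

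There is essentially no serious obstacle here; the only point requiring a little care is the half-angle factorization and the bookkeeping of the exponential phases, together with the observation that $\sin(\pi x)\neq 0$ exactly on the range $x\in\mathbb{R}\setminus\Z$ where the second formula is asserted, so no division by zero occurs.
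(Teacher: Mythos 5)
Your proof is correct and follows essentially the same route as the paper: the integer case is handled trivially, and the non-integer case is a geometric series summation followed by extracting half-angle exponential factors to produce the $\sin(n\pi x)/\sin(\pi x)$ ratio and the phase $e\!\left(\tfrac{(n+1)x}{2}\right)$. The only cosmetic difference is that you factor the half-angles out of numerator and denominator separately, while the paper multiplies through by $e(-x/2)$ first; the computations are equivalent.
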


\begin{proof}
When $x \in \Z$ the result is clear because in this case $e(x)=e(mx)=1$. So, we let $x \in \mathbb{R} \setminus \Z$. Since $e(x)\not=1$, summing the geometric progression gives
\begin{align*}
\sum_{m=1}^{n}e(mx) &= \frac{e(x)\left(1-e(nx)\right)}{1-e(x)}\\
&= \frac{e\left(\frac{x}{2}\right)\left(1-e(nx)\right)}{e\left(\frac{-x}{2}\right)\left(1-e(x)\right)}\\
&= \frac{e\left(\frac{x}{2}\right)-e\left(\frac{(2n+1)x}{2}\right)}{e\left(\frac{-x}{2}\right)-e\left(\frac{x}{2}\right)}\\
&= \frac{e\left(\frac{-nx}{2}\right)-e\left(\frac{nx}{2}\right)}{e\left(\frac{-x}{2}\right)-e\left(\frac{x}{2}\right)}e\left(\frac{(n+1)x}{2}\right)\\
&= \frac{-2i\sin (nx \pi)}{-2i\sin (x \pi)}e\left(\frac{(n+1)x}{2}\right)\\
&= \frac{\sin (nx \pi)}{\sin (x \pi)}e\left(\frac{(n+1)x}{2}\right). 
\end{align*}
\end{proof}

For integers $m$ and $n$ with $n \geq 1$ the quantity 
\begin{align}\label{def1}
c_n(m) = \mathlarger{\sum}_{\substack{j=1 \\ \gcd(j,n)=1}}^{n}
e\!\left(\frac{jm}{n}\right)
\end{align}
is called a {\it Ramanujan sum}. It is the sum of the $m$-th powers of the primitive $n$-th roots of unity, and is also denoted by $c(m,n)$ in the literature. Even though the Ramanujan sum $c_n(m)$ is defined as a sum of some complex numbers, it is integer-valued (see Theorem~\ref{thm:Ram Mob} below). From (\ref{def1}), it is clear that $c_n(-m) = c_n(m)$. 

Ramanujan sums and some of their properties were certainly known before Ramanujan's paper \cite{RAM}, as Ramanujan himself declared \cite{RAM}; nonetheless, probably the reason that these sums bear Ramanujan's name is that ``Ramanujan was the first to appreciate the importance of the sum and to use it systematically", according to Hardy (see, \cite{FGK} for a discussion about this).

Ramanujan sums have important applications in additive number theory, for example, in the context of the Hardy-Littlewood circle method, Waring's problem, and sieve theory (see, e.g., \cite{MOVA, NAT, VAU} and the references therein). As a major result in this direction, one can mention Vinogradov's theorem (in its proof, Ramanujan sums play a key role) stating that every sufficiently large odd integer is the sum of three primes, and so every sufficiently large even integer is the sum of four primes (see, e.g., \cite[Chapter 8]{NAT}). Ramanujan sums have also interesting applications in cryptography \cite{BKSTT2, SCS}, coding theory \cite{BKS7, GIN}, combinatorics \cite{BKSTT, MENE}, graph theory \cite{DR, MS}, signal processing \cite{VAI1, VAI2}, and physics \cite{BKS2, PMS}.

Clearly, $c_n(0)=\varphi (n)$, where $\varphi (n)$ is {\it Euler's totient function}. Also, by Theorem~\ref{thm:Ram Mob} (see below), $c_n(1)=\mu (n)$, where $\mu (n)$ is the {\it M\"{o}bius function} defined by
\begin{align}\label{def2}
 \mu (n)&=
  \begin{cases}
    1, & \text{if $n=1$,}\\
    0, & \text{if $n$ is not square-free,}\\
    (-1)^{\kappa}, & \text{if $n$ is the product of $\kappa$ distinct primes}.
  \end{cases}
\end{align}

The following theorem, attributed to Kluyver~\cite{KLU}, gives an explicit formula for $c_n(m)$:

\begin{theorem} \label{thm:Ram Mob}
For integers $m$ and $n$, with $n \geq 1$,
\begin{align}\label{for:Ram Mob}
c_n(m) = \mathlarger{\sum}_{d\, \mid\, \gcd(m,n)} \mu
\!\left(\frac{n}{d}\right)d.
\end{align}
\end{theorem}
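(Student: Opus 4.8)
The plan is to prove Kluyver's formula $c_n(m) = \sum_{d \mid \gcd(m,n)} \mu(n/d)\, d$ by a M\"{o}bius-inversion argument applied to a natural partition of the full set of $n$-th roots of unity according to their exact order. First I would recall that every $n$-th root of unity is a primitive $d$-th root of unity for exactly one divisor $d$ of $n$, and group the sum $\sum_{j=1}^{n} e(jm/n)$ accordingly. Writing $\eta_d(m) = \sum_{\substack{j=1,\ \gcd(j,d)=1}}^{d} e(jm/d)$ for the sum of $m$-th powers of the primitive $d$-th roots of unity, the partition gives the identity $\sum_{d \mid n} \eta_d(m) = \sum_{j=1}^{n} e(jm/n)$, and the inner object $\eta_d$ is exactly $c_d(m)$ by definition~(\ref{def1}).

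Next I would evaluate the right-hand side $\sum_{j=1}^{n} e(jm/n)$ in closed form using Lemma~\ref{lem: sine} (or directly: it is the sum of all $n$-th roots of unity raised to the $m$-th power). This sum equals $n$ when $n \mid m$ and equals $0$ otherwise; equivalently it equals $\sum_{d \mid \gcd(m,n),\ d = n}$-type indicator, but more usefully I would phrase it as: $\sum_{d \mid n} c_d(m) = n \cdot [\,n \mid m\,]$. Applying this with $n$ replaced by each divisor $e$ of $n$ gives $\sum_{d \mid e} c_d(m) = e\cdot[\,e\mid m\,]$ for all $e \mid n$. Now I would invoke M\"{o}bius inversion over the divisor lattice of $n$: from $F(e) := \sum_{d\mid e} c_d(m) = e\cdot[e\mid m]$ we recover $c_n(m) = \sum_{d \mid n} \mu(n/d)\, F(d) = \sum_{d \mid n} \mu(n/d)\, d\cdot[\,d \mid m\,]$. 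The condition $d \mid n$ together with $d \mid m$ is precisely $d \mid \gcd(m,n)$, so this collapses to $c_n(m) = \sum_{d \mid \gcd(m,n)} \mu(n/d)\, d$, as claimed. As an immediate byproduct, since the right-hand side is a sum of integers, $c_n(m) \in \Z$, which is the integrality claim referenced in the text.

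The main obstacle — really the only nontrivial point — is justifying the partition identity $\sum_{d\mid n}\eta_d(m) = \sum_{j=1}^n e(jm/n)$ cleanly, i.e.\ verifying that as $j$ ranges over $\{1,\dots,n\}$, the ratio $j/n$ in lowest terms has denominator $d$ for exactly the values with $\gcd(j,n) = n/d$, and that these contribute $\sum_{\gcd(i,d)=1} e(im/d) = \eta_d(m)$. This is a standard reindexing (set $i = j/(n/d)$), but it must be stated carefully so that each primitive $d$-th root of unity is counted once. Everything after that is the routine machinery of M\"{o}bius inversion, which I would either cite or spell out in two lines. An alternative, essentially equivalent route would be to expand $\mu(n/d)$ via its defining sum and interchange the order of summation against the geometric-series evaluation of Lemma~\ref{lem: sine}, but the inversion argument above is cleaner and makes the integrality of $c_n(m)$ transparent.
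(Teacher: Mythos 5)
The paper does not actually prove this theorem: it states the formula as a classical result attributed to Kluyver and cites \cite{KLU}, so there is no in-paper argument to compare yours against. Your proof is correct and is the standard one: the partition of $\{1,\dots,n\}$ by $\gcd(j,n)=n/d$ gives $\sum_{d\mid n}c_d(m)=\sum_{j=1}^{n}e(jm/n)=n\cdot[\,n\mid m\,]$, this holds with $n$ replaced by any divisor $e$ of $n$, and M\"{o}bius inversion then yields $c_n(m)=\sum_{d\mid n}\mu(n/d)\,d\,[\,d\mid m\,]=\sum_{d\mid\gcd(m,n)}\mu(n/d)\,d$. The one step you flag as needing care --- that $j\mapsto i=j/(n/d)$ is a bijection from $\{1\le j\le n:\gcd(j,n)=n/d\}$ onto $\{1\le i\le d:\gcd(i,d)=1\}$ --- is indeed the only point requiring an explicit check, and it goes through exactly as you describe; everything else is routine. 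Your observation that integrality of $c_n(m)$ falls out for free is also correct and is precisely the remark the paper makes just after the definition.
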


Thus, $c_n(m)$ can be easily computed provided $n$ can be factored efficiently. One should compare (\ref{for:Ram Mob}) with the formula
\begin{align}\label{for:Phi Mob}
\varphi (n) = \mathlarger{\sum}_{d\, \mid\, n} \mu \left(\frac{n}{d}\right)d.
\end{align}

\section{Weight enumerator of the Binary Linear Congruence Code}\label{Sec_3}

Using a simple number theoretic argument, we give an explicit formula for the weight enumerator (and the size) of the Binary Linear Congruence Code (BLCC) ${\cal C}$. Another result which automatically follows from our result is an explicit formula for the number of binary solutions of an \textit{arbitrary} linear congruence which, to the best of our knowledge, is the first result of its kind in the literature and may be of independent interest. 

The following lemma is useful for proving our main result. 

\begin{lemma}\label{lem: binary}
Let $n$, $k$ be positive integers. For any $k$-tuple $\mathbf{m}=\langle m_1,\ldots,m_k \rangle \in \mathbb{C}^k$, we have 
\begin{align}\label{lem: binaryF}
\mathlarger{\prod}_{j=1}^{k}\left(1+ze\left(\frac{m_j}{n}\right)\right)=\mathlarger{\sum}_{\mathbf{d} \in \{0,1\}^k}e\left(\frac{\mathbf{d}\cdot\mathbf{m}}{n}\right)z^{w(\mathbf{d})}.
\end{align} 
\end{lemma}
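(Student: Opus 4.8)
The plan is to expand the left-hand product directly by choosing, for each factor, either the term $1$ or the term $ze(m_j/n)$, and to observe that the bookkeeping of these choices is exactly a sum over subsets of $\{1,\dots,k\}$, equivalently over indicator vectors $\mathbf{d}\in\{0,1\}^k$. Concretely, for a finite product $\prod_{j=1}^k (1+t_j)$ one has the elementary identity $\prod_{j=1}^k(1+t_j)=\sum_{S\subseteq\{1,\dots,k\}}\prod_{j\in S}t_j$. I would first state this, either citing it as the standard distributive-law expansion or proving it in one line by induction on $k$ (the base case $k=1$ is $1+t_1$, and the inductive step multiplies the hypothesis by $(1+t_{k+1})$ and splits the resulting sum according to whether the new index is included).

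Next I would substitute $t_j=ze(m_j/n)$. For a subset $S$ with indicator vector $\mathbf{d}$, the corresponding term is
\[
\prod_{j\in S} z\,e\!\left(\frac{m_j}{n}\right)=z^{|S|}\prod_{j\in S}e\!\left(\frac{m_j}{n}\right)=z^{w(\mathbf{d})}\,e\!\left(\frac{\sum_{j\in S}m_j}{n}\right),
\]
using that $e(\cdot)$ is multiplicative in the sense $e(x)e(y)=e(x+y)$, that $|S|=w(\mathbf{d})$ by definition of Hamming weight, and that $\sum_{j\in S}m_j=\mathbf{d}\cdot\mathbf{m}$. Summing over all $\mathbf{d}\in\{0,1\}^k$ then yields exactly the right-hand side $\sum_{\mathbf{d}\in\{0,1\}^k}e(\mathbf{d}\cdot\mathbf{m}/n)z^{w(\mathbf{d})}$, completing the proof.

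There is essentially no obstacle here: the statement is a purely formal identity in the polynomial (or, if one prefers, rational-function) ring over $\mathbb{C}$ in the indeterminate $z$, and every step is an application of the distributive law together with the additive-to-multiplicative property of the exponential. The only things to be slightly careful about are notational: making sure the correspondence $S\leftrightarrow\mathbf{d}$ is spelled out, noting $w(\mathbf{d})=|S|$, and confirming $\mathbf{d}\cdot\mathbf{m}=\sum_{j\in S}m_j$. If one wanted to avoid invoking the subset-expansion identity as a black box, the cleanest self-contained route is the induction on $k$ described above, carrying the $z$-weight and the exponential factor along simultaneously; I would probably present it that way for a self-contained feel, but either presentation is a few lines at most.
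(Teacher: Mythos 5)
Your proposal is correct and matches the paper's proof, which simply says to expand the left-hand side and use $e(x)e(y)=e(x+y)$; you have just written out the subset-expansion bookkeeping in more detail. No issues.
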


\begin{proof}
Expand the left-hand side of (\ref{lem: binaryF}) and note that $e(x)e(y)=e(x+y)$. 
\end{proof}

Now we are ready to state and prove our main result. 

\begin{theorem}\label{BLCCWE}
Let $n$, $k$ be positive integers, $a_1,\ldots,a_k\in \Z$, and $b\in \Z_n$. The weight enumerator of the Binary Linear Congruence Code \textnormal{(BLCC)} ${\cal C}$ is
\begin{align}\label{BLCCWEF}
W_{\cal C}(z)= \frac{1}{n}\mathlarger{\sum}_{m=1}^{n}e\left(\frac{-bm}{n}\right)\mathlarger{\prod}_{j=1}^{k}\left(1+ze\left(\frac{a_jm}{n}\right)\right).
\end{align}
\end{theorem}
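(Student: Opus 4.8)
The plan is to compute $W_{\mathcal C}(z)$ directly from the definition of the weight enumerator, writing $\mathcal C$ as the subset of $\{0,1\}^k$ cut out by the congruence $a_1c_1+\cdots+a_kc_k\equiv b\pmod n$, and then to detect that congruence with a root-of-unity filter. The first step is to record the orthogonality relation
$$
\frac{1}{n}\sum_{m=1}^{n}e\!\left(\frac{tm}{n}\right)=
\begin{cases}
1, & \text{if } n\mid t,\\
0, & \text{otherwise},
\end{cases}
$$
valid for every integer $t$. This is immediate from Lemma~\ref{lem: sine}: for integral $t$ one has $t/n\in\Z$ precisely when $n\mid t$, in which case the sum equals $n$; and when $n\nmid t$ the factor $\sin(t\pi)=0$ makes the sum vanish.

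Next I would insert this filter, with $t=\mathbf a\cdot\mathbf c-b$ where $\mathbf a=\langle a_1,\ldots,a_k\rangle$, into
$$
W_{\mathcal C}(z)=\sum_{\mathbf c\in\mathcal C}z^{w(\mathbf c)}
=\sum_{\mathbf c\in\{0,1\}^k}\left(\frac{1}{n}\sum_{m=1}^{n}e\!\left(\frac{(\mathbf a\cdot\mathbf c-b)m}{n}\right)\right)z^{w(\mathbf c)},
$$
using that the bracketed quantity equals $1$ when $\mathbf a\cdot\mathbf c\equiv b\pmod n$ and $0$ otherwise. Since both sums are finite I may interchange them and pull the term $e(-bm/n)$ out of the inner sum, obtaining
$$
W_{\mathcal C}(z)=\frac{1}{n}\sum_{m=1}^{n}e\!\left(\frac{-bm}{n}\right)\sum_{\mathbf c\in\{0,1\}^k}e\!\left(\frac{(m\mathbf a)\cdot\mathbf c}{n}\right)z^{w(\mathbf c)}.
$$
The inner sum is precisely the right-hand side of (\ref{lem: binaryF}) applied to the tuple $\mathbf m=m\mathbf a=\langle ma_1,\ldots,ma_k\rangle\in\C^k$, so by Lemma~\ref{lem: binary} it equals $\prod_{j=1}^{k}\bigl(1+ze(ma_j/n)\bigr)$. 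Substituting this expression back in gives (\ref{BLCCWEF}).

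There is no serious obstacle in this argument; it is a standard root-of-unity filter. The two points that merit a little attention are the borderline case $t\in\Z\setminus n\Z$ of the orthogonality relation, which must be read off carefully from Lemma~\ref{lem: sine}, and the recognition that the inner sum is exactly an instance of Lemma~\ref{lem: binary} for the particular choice $\mathbf m=m\mathbf a$. Finally I would remark that specializing to $z=1$ collapses the product $\prod_{j=1}^{k}\bigl(1+e(a_jm/n)\bigr)$ and yields a closed form for $|\mathcal C|$, i.e.\ for the number of binary solutions of an arbitrary linear congruence, thereby settling the $q=2$ case of Problem~\ref{Prob: lin cong}.
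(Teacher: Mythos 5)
Your proof is correct, and it rests on the same two ingredients as the paper's: the geometric-sum orthogonality relation of Lemma~\ref{lem: sine} and the product expansion of Lemma~\ref{lem: binary}. The only real difference is the direction of the computation. The paper starts from the product side, expands it via Lemma~\ref{lem: binary}, and then must introduce an auxiliary solution $\mathbf{y}\in\Z_n^k$ of the congruence in order to rewrite $e(-m(\mathbf{a}\cdot\mathbf{y})/n)$ as $e(-bm/n)$; you instead start from $\sum_{\mathbf{c}\in\mathcal C}z^{w(\mathbf{c})}$, insert the root-of-unity filter for the condition $\mathbf{a}\cdot\mathbf{c}\equiv b\pmod n$, swap the finite sums, and collapse the inner sum with Lemma~\ref{lem: binary}. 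Your organization is slightly cleaner: it never needs the congruence to have a solution, so it handles the degenerate case $\gcd(a_1,\ldots,a_k,n)\nmid b$ (where $\mathcal C=\emptyset$ and both sides vanish) without comment, whereas the paper's phrasing tacitly assumes a solution $\mathbf{y}$ exists. Substantively, though, the two arguments are the same orthogonality computation.
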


\begin{proof}
By Lemma~\ref{lem: binary}, for any $k$-tuple $\mathbf{m}=\langle m_1,\ldots,m_k \rangle \in \mathbb{C}^k,$ we have 
\begin{align*}
\mathlarger{\prod}_{j=1}^{k}\left(1+ze\left(\frac{m_j}{n}\right)\right)=\mathlarger{\sum}_{\mathbf{d}=\langle d_1,\ldots,d_k \rangle \in \{0,1\}^k}e\left(\frac{\mathbf{d}\cdot\mathbf{m}}{n}\right)z^{w(\mathbf{d})}.
\end{align*}
Let $\mathbf{y}=\langle y_1,\ldots,y_k \rangle \in \mathbb{Z}_n^k$ be a solution of the linear congruence $a_1x_1+\cdots +a_kx_k\equiv b \pmod{n}$. Then we have
\begin{align*}
\ & e\left(\frac{-(\mathbf{m}\cdot\mathbf{y})}{n}\right)\mathlarger{\prod}_{j=1}^{k}\left(1+ze\left(\frac{m_j}{n}\right)\right)\\
= \ & \mathlarger{\sum}_{\mathbf{d}=\langle d_1,\ldots,d_k \rangle \in \{0,1\}^k}e\left(\frac{\mathbf{d}\cdot\mathbf{m}-\mathbf{m}\cdot\mathbf{y}}{n}\right)z^{w(\mathbf{d})}.
\end{align*}
Let $\mathbf{a}=\langle a_1,\ldots,a_k \rangle$ and $M=\{\langle a_1m,\ldots,a_km \rangle : m=1,\ldots,n\}$. Note that since $\mathbf{y}=\langle y_1,\ldots,y_k \rangle \in \mathbb{Z}_n^k$ is a solution of the linear congruence $a_1x_1+\cdots +a_kx_k\equiv b \pmod{n}$, we get $a_1y_1+\cdots +a_ky_k=\alpha n +b$, for some $\alpha \in \mathbb{Z}$. Similarly, $a_1d_1+\cdots +a_kd_k=\beta n +b'$, for some $\beta \in \mathbb{Z}$ and $b' \in \mathbb{Z}_n$. 

Therefore,
\begin{align*}
\ & \mathlarger{\sum}_{\mathbf{m} \in M}e\left(\frac{-(\mathbf{m}\cdot\mathbf{y})}{n}\right)\mathlarger{\prod}_{j=1}^{k}\left(1+ze\left(\frac{m_j}{n}\right)\right)\\
= \ & \mathlarger{\sum}_{\mathbf{d} \in \{0,1\}^k}\left(\mathlarger{\sum}_{\mathbf{m} \in M} e\left(\frac{\mathbf{d}\cdot\mathbf{m}-\mathbf{m}\cdot\mathbf{y}}{n}\right)\right)z^{w(\mathbf{d})}.
\end{align*}
Thus, 
\begin{align*}
\ & \mathlarger{\sum}_{m=1}^{n}e\left(\frac{-m(\mathbf{a}\cdot\mathbf{y})}{n}\right)\mathlarger{\prod}_{j=1}^{k}\left(1+ze\left(\frac{a_jm}{n}\right)\right)
\end{align*}
\begin{align*}
= \ & \mathlarger{\sum}_{\mathbf{d}=\langle d_1,\ldots,d_k \rangle \in \{0,1\}^k}\left(\mathlarger{\sum}_{m=1}^{n} e\left(\frac{m(\mathbf{d}\cdot\mathbf{a}-\mathbf{a}\cdot\mathbf{y})}{n}\right)\right)z^{w(\mathbf{d})}\Longrightarrow \\
\ & \mathlarger{\sum}_{m=1}^{n}e\left(\frac{-m(\alpha n + b)}{n}\right)\mathlarger{\prod}_{j=1}^{k}\left(1+ze\left(\frac{a_jm}{n}\right)\right)\\ 
= \ & \mathlarger{\sum}_{\mathbf{d}=\langle d_1,\ldots,d_k \rangle \in \{0,1\}^k}\left(\mathlarger{\sum}_{m=1}^{n} e\left(\frac{m((\beta-\alpha)n+b'-b)}{n}\right)\right)z^{w(\mathbf{d})}.
\end{align*}
Thus,
\begin{align*}
\ & \mathlarger{\sum}_{m=1}^{n}e\left(\frac{-bm}{n}\right)\mathlarger{\prod}_{j=1}^{k}\left(1+ze\left(\frac{a_jm}{n}\right)\right)\\ 
= \ & \mathlarger{\sum}_{\mathbf{d}=\langle d_1,\ldots,d_k \rangle \in \{0,1\}^k}\left(\mathlarger{\sum}_{m=1}^{n} e\left(\frac{m(b'-b)}{n}\right)\right)z^{w(\mathbf{d})}\\
= \ & \mathlarger{\sum}_{\mathbf{d}=\langle d_1,\ldots,d_k \rangle \in {\cal C}}\left(\mathlarger{\sum}_{m=1}^{n} e\left(\frac{m(b'-b)}{n}\right)\right)z^{w(\mathbf{d})}\\
+ \ & \mathlarger{\sum}_{\mathbf{d}=\langle d_1,\ldots,d_k \rangle \in \{0,1\}^k \setminus {\cal C}}\left(\mathlarger{\sum}_{m=1}^{n} e\left(\frac{m(b'-b)}{n}\right)\right)z^{w(\mathbf{d})}.
\end{align*}
By Lemma~\ref{lem: sine},
\begin{equation*}
\mathlarger{\sum}_{m=1}^n e\!\left(\frac{m(b'-b)}{n}\right) = \begin{cases}
n, \ & \text{ if $n\mid b'-b$}, \\  0, \ & \text{ if $n\nmid b'-b$}.
\end{cases}
\end{equation*}
Note that if $\mathbf{d}=\langle d_1,\ldots,d_k \rangle \in {\cal C}$ then $b'=b$ (and so $n\mid b'-b$), and if $\mathbf{d}=\langle d_1,\ldots,d_k \rangle \in \{0,1\}^k \setminus {\cal C}$ then $b'\not=b$ (and so $n\nmid b'-b$ because $b',b \in \mathbb{Z}_n$). This implies that
\begin{align*}
\mathlarger{\sum}_{\mathbf{d} \in {\cal C}}\left(\mathlarger{\sum}_{m=1}^{n} e\left(\frac{m(b'-b)}{n}\right)\right)z^{w(\mathbf{d})}=n\mathlarger{\sum}_{\mathbf{d} \in {\cal C}}z^{w(\mathbf{d})},
\end{align*}
and
\begin{align*}
\mathlarger{\sum}_{\mathbf{d} \in \{0,1\}^k \setminus {\cal C}}\left(\mathlarger{\sum}_{m=1}^{n} e\left(\frac{m(b'-b)}{n}\right)\right)z^{w(\mathbf{d})}=0.
\end{align*}
Consequently,
\begin{align*}
W_{\cal C}(z)=\mathlarger{\sum}_{\mathbf{c} \in {\cal C}}z^{w(\mathbf{c})}= \frac{1}{n} \mathlarger{\sum}_{m=1}^{n}e\left(\frac{-bm}{n}\right)\mathlarger{\prod}_{j=1}^{k}\left(1+ze\left(\frac{a_jm}{n}\right)\right). 
\end{align*}
\end{proof}

Setting $z=1$ in (\ref{BLCCWEF}) gives the size of the Binary Linear Congruence Code (BLCC) ${\cal C}$. Equivalently, it solves Problem~\ref{Prob: lin cong} when $q=2$, that is, it gives an explicit formula for the number of binary solutions of an \textit{arbitrary} linear congruence.

\begin{corollary}\label{BLCCS}
Let $n$, $k$ be positive integers, $a_1,\ldots,a_k\in \Z$, and $b\in \Z_n$. The number of solutions of the linear congruence $a_1x_1+\cdots +a_kx_k\equiv b \pmod{n}$ in $\Z_{2}^k$ is
\begin{align}\label{BLCCSF}
W_{\cal C}(1)= \frac{2^k}{n}\mathlarger{\sum}_{m=1}^{n}e\left(\frac{\eta m}{n}\right)\mathlarger{\prod}_{j=1}^{k}\cos \left(\frac{\pi a_jm}{n}\right)\geq 0,
\end{align}
where $\eta=-b+\frac{1}{2}\sum_{j=1}^{k}a_j$. This implies that 
\begin{align}\label{BLCCSFB}
W_{\cal C}(1)\leq \frac{2^k}{n}\mathlarger{\sum}_{m=1}^{n}\mathlarger{\prod}_{j=1}^{k}\left|\cos \left(\frac{\pi a_jm}{n}\right)\right|.
\end{align}
\end{corollary}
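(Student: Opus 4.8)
The plan is to specialize Theorem~\ref{BLCCWE} at $z=1$ and simplify each factor in the product via a half-angle identity. First I would set $z=1$ in (\ref{BLCCWEF}), obtaining
$W_{\cal C}(1)=\frac{1}{n}\sum_{m=1}^{n}e\!\left(\frac{-bm}{n}\right)\prod_{j=1}^{k}\left(1+e\!\left(\frac{a_jm}{n}\right)\right)$.
Next, for each $j$ I would factor out a ``half'' of the exponential: since $e(t/2)+e(-t/2)=2\cos(\pi t)$ (recall $e(x)=\exp(2\pi i x)$), we have
$1+e\!\left(\frac{a_jm}{n}\right)=e\!\left(\frac{a_jm}{2n}\right)\left(e\!\left(\frac{-a_jm}{2n}\right)+e\!\left(\frac{a_jm}{2n}\right)\right)=2\cos\!\left(\frac{\pi a_jm}{n}\right)e\!\left(\frac{a_jm}{2n}\right)$.
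Multiplying these $k$ identities gives
$\prod_{j=1}^{k}\left(1+e\!\left(\frac{a_jm}{n}\right)\right)=2^{k}\Big(\prod_{j=1}^{k}\cos\!\left(\frac{\pi a_jm}{n}\right)\Big)e\!\left(\frac{m}{2n}\sum_{j=1}^{k}a_j\right)$.

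I would then absorb the leftover exponential into $e(-bm/n)$, using $e(x)e(y)=e(x+y)$:
$e\!\left(\frac{-bm}{n}\right)e\!\left(\frac{m}{2n}\sum_{j=1}^{k}a_j\right)=e\!\left(\frac{m}{n}\Big(-b+\tfrac12\sum_{j=1}^{k}a_j\Big)\right)=e\!\left(\frac{\eta m}{n}\right)$
with $\eta=-b+\frac12\sum_{j=1}^{k}a_j$. Substituting back yields exactly the equality part of (\ref{BLCCSF}). The inequality $W_{\cal C}(1)\ge 0$ is then immediate: by the last remark in the definition of the weight enumerator, $W_{\cal C}(1)=|{\cal C}|$ is the number of binary solutions of the congruence, hence a nonnegative integer; in particular the a priori complex sum on the right of (\ref{BLCCSF}) is forced to be real and nonnegative.

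Finally, for (\ref{BLCCSFB}) I would simply apply the triangle inequality to (\ref{BLCCSF}), using $W_{\cal C}(1)=|W_{\cal C}(1)|$ (from nonnegativity) and $\left|e(\eta m/n)\right|=1$, to get $W_{\cal C}(1)\le \frac{2^{k}}{n}\sum_{m=1}^{n}\prod_{j=1}^{k}\left|\cos\!\left(\frac{\pi a_jm}{n}\right)\right|$. I do not anticipate a genuine obstacle here; the only points that need care are tracking the constant $2^{k}$ and the argument $\pi a_jm/n$ versus $2\pi a_jm/n$ correctly through the half-angle step, and remembering that the reality of the right-hand side of (\ref{BLCCSF}) is an output of the computation rather than something to be verified separately (though one could also see it directly by pairing $m$ with $n-m$).
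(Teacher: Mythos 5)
Your proposal is correct and follows essentially the same route as the paper: specialize Theorem~\ref{BLCCWE} at $z=1$, factor each term as $1+e\!\left(\frac{a_jm}{n}\right)=2\cos\!\left(\frac{\pi a_jm}{n}\right)e\!\left(\frac{a_jm}{2n}\right)$, absorb the leftover phase into $e\!\left(\frac{\eta m}{n}\right)$, and conclude the bound (\ref{BLCCSFB}) by the triangle inequality together with the nonnegativity of $W_{\cal C}(1)$ as a count of codewords. Your added remarks on why the right-hand side is real and nonnegative are fine and only make explicit what the paper leaves implicit.
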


\begin{proof}
We have
\begin{align*}
\ & W_{\cal C}(1)= \frac{1}{n}\mathlarger{\sum}_{m=1}^{n}e\left(\frac{-bm}{n}\right)\mathlarger{\prod}_{j=1}^{k}\left(1+e\left(\frac{a_jm}{n}\right)\right)\\
&= \frac{1}{n}\mathlarger{\sum}_{m=1}^{n}e\left(\frac{-bm}{n}\right)\mathlarger{\prod}_{j=1}^{k}e\left(\frac{a_jm}{2n}\right)\mathlarger{\prod}_{j=1}^{k}\left(e\left(\frac{-a_jm}{2n}\right)+e\left(\frac{a_jm}{2n}\right)\right)\\
&= \frac{1}{n}\mathlarger{\sum}_{m=1}^{n}e\left(\frac{-bm}{n}\right)e\left(\frac{m}{2n}\sum_{j=1}^{k}a_j\right)\mathlarger{\prod}_{j=1}^{k}2\cos \left(\frac{\pi a_jm}{n}\right)\\
&= \frac{2^k}{n}\mathlarger{\sum}_{m=1}^{n}e\left(\frac{\eta m}{n}\right)\mathlarger{\prod}_{j=1}^{k}\cos \left(\frac{\pi a_jm}{n}\right),    
\end{align*}
where $\eta=-b+\frac{1}{2}\sum_{j=1}^{k}a_j$. Consequently, we have
$$
W_{\cal C}(1)\leq \frac{2^k}{n}\mathlarger{\sum}_{m=1}^{n}\mathlarger{\prod}_{j=1}^{k}\left|\cos \left(\frac{\pi a_jm}{n}\right)\right|.
$$
\end{proof}

\begin{rema}
Recently, Gabrys et al.~\cite{GYM} proposed several variants of the Levenshtein code which are all special cases of our Binary Linear Congruence Code \textnormal{(BLCC)} ${\cal C}$. Theorem~\ref{BLCCWE} hence provides explicit formulas for the weight enumerators of such codes.
\end{rema}

\section{Weight enumerators of the aforementioned codes}\label{Sec_4}

Using Theorem~\ref{BLCCWE}, we now describe explicit formulas for the weight enumerators (and the sizes) of the Helberg code, the Levenshtein code, and the Varshamov--Tenengolts code. Note that the same approach may be used to derive the weight enumerators of most variants of these codes since they are special cases of Binary Linear Congruence Codes (BLCC) ${\cal C}$. In addition, we derive a formula for the size of the Shifted Varshamov--Tenengolts code.

\subsection{Weight enumerator of the Helberg code}

The Helberg code has the same structure as the Binary Linear Congruence Code (BLCC) ${\cal C}$ but with some additional restrictions on the coefficients and the modulus. So, Theorem~\ref{BLCCWE} immediately gives the following result.

\begin{theorem}\label{W-H}
The weight enumerator of the Helberg code $H_b(k,s)$ is
\begin{align}\label{W-H-F}
W_{H_b(k,s)}(z)= \frac{1}{n}\mathlarger{\sum}_{m=1}^{n}e\left(\frac{-bm}{n}\right)\mathlarger{\prod}_{j=1}^{k}\left(1+ze\left(\frac{v_jm}{n}\right)\right).
\end{align}
\end{theorem}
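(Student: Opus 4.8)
The plan is to observe that the Helberg code $H_b(k,s)$ is literally an instance of the Binary Linear Congruence Code (BLCC) ${\cal C}$, so that Theorem~\ref{BLCCWE} applies verbatim. Concretely, in the definition of $H_b(k,s)$ the codewords are the binary $k$-tuples $\langle s_1,\ldots,s_k\rangle$ satisfying $\sum_{i=1}^{k} v_i s_i \equiv b \pmod{n}$, where the multipliers $v_i$ (the modified Fibonacci numbers $v_i = 1 + \sum_{j=1}^{s} v_{i-j}$ for $i\geq 1$, $v_i=0$ for $i\leq 0$) are fixed integers once $s$ is fixed, the modulus is $n=v_{k+1}$, and $b\in\Z_n$. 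First I would make this identification explicit: set $a_j := v_j$ for $j=1,\ldots,k$ and use the same $n$ and $b$; then the congruence defining $H_b(k,s)$ is exactly the congruence $a_1 c_1 + \cdots + a_k c_k \equiv b \pmod{n}$ defining ${\cal C}$, so $H_b(k,s) = {\cal C}$ as sets of binary $k$-tuples, and hence $W_{H_b(k,s)}(z) = W_{\cal C}(z)$.

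The second and final step is simply to substitute $a_j = v_j$ into the formula $(\ref{BLCCWEF})$ of Theorem~\ref{BLCCWE}, which yields
\begin{align*}
W_{H_b(k,s)}(z) = \frac{1}{n}\mathlarger{\sum}_{m=1}^{n} e\!\left(\frac{-bm}{n}\right)\mathlarger{\prod}_{j=1}^{k}\left(1 + z\, e\!\left(\frac{v_j m}{n}\right)\right),
\end{align*}
which is precisely $(\ref{W-H-F})$. That is the entire argument.

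There is essentially no obstacle here — the only thing worth a sentence of care is to note that the hypotheses of Theorem~\ref{BLCCWE} are met, namely that $n$ and $k$ are positive integers (true, since $s,k$ are positive integers and $n=v_{k+1}\geq 1$), that $v_1,\ldots,v_k\in\Z$ (true by construction), and that $b\in\Z_n$ (given). One might also remark, as the surrounding text already does, that this is a direct specialization: the Helberg code ``has the same structure as the BLCC but with additional restrictions on the coefficients and the modulus,'' so no new number-theoretic input beyond Lemma~\ref{lem: sine} and Lemma~\ref{lem: binary} (already used to prove Theorem~\ref{BLCCWE}) is needed. If one wanted, one could additionally specialize Corollary~\ref{BLCCS} to read off the size $W_{H_b(k,s)}(1)$ in cosine form with $\eta = -b + \frac12\sum_{j=1}^{k} v_j$, but that is optional and not required by the statement.
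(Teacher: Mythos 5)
Your proposal is correct and follows exactly the paper's route: the paper also treats Theorem~\ref{W-H} as an immediate specialization of Theorem~\ref{BLCCWE} obtained by taking $a_j = v_j$, with $n = v_{k+1}$, and offers no further argument. Your added sentence verifying the hypotheses of Theorem~\ref{BLCCWE} is a harmless (and slightly more careful) elaboration of the same one-line observation.
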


As the coefficients in the Helberg code are a modified version of the Fibonacci numbers, it may be possible to connect trigonometric sums as described above with the Fibonacci and Lucas numbers~\cite{BISH1}, and hence simplify (\ref{W-H-F}).

\begin{corollary}\label{S-H}
The size of the Helberg code $H_b(k,s)$ equals
\begin{align}\label{S-H-F}
W_{H_b(k,s)}(1)= \frac{2^k}{n}\mathlarger{\sum}_{m=1}^{n}e\left(\frac{\eta m}{n}\right)\mathlarger{\prod}_{j=1}^{k}\cos \left(\frac{\pi v_jm}{n}\right),
\end{align}
where $\eta=-b+\frac{1}{2}\sum_{j=1}^{k}v_j$. This implies that 
\begin{align}\label{S-H-FB}
W_{H_b(k,s)}(1)\leq \frac{2^k}{n}\mathlarger{\sum}_{m=1}^{n}\mathlarger{\prod}_{j=1}^{k}\left|\cos \left(\frac{\pi v_jm}{n}\right)\right|.
\end{align}
\end{corollary}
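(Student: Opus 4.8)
The plan is to obtain Corollary~\ref{S-H} simply by setting $z=1$ in the weight enumerator formula (\ref{W-H-F}) of Theorem~\ref{W-H}; equivalently, one invokes Corollary~\ref{BLCCS} directly with the specialization $a_j = v_j$ for $j=1,\ldots,k$ and $n = v_{k+1}$. Since the Helberg code $H_b(k,s)$ is by definition the Binary Linear Congruence Code with coefficients $v_1,\ldots,v_k$ and modulus $n$ (these are positive integers and $b\in\Z_n$, so the BLCC hypotheses are met), no new argument is needed beyond this substitution.

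In more detail, after putting $z=1$ the $j$-th factor of the product becomes $1 + e(v_j m/n)$, and I would first extract a half-angle by writing $1 + e(v_j m/n) = e\bigl(\tfrac{v_j m}{2n}\bigr)\bigl(e(-\tfrac{v_j m}{2n}) + e(\tfrac{v_j m}{2n})\bigr) = 2\, e\bigl(\tfrac{v_j m}{2n}\bigr)\cos\bigl(\tfrac{\pi v_j m}{n}\bigr)$, using $e(x)+e(-x)=2\cos(2\pi x)$. Taking the product over $j$ collects a factor $2^k$ and a phase $e\bigl(\tfrac{m}{2n}\sum_{j=1}^k v_j\bigr)$, which combines with the existing factor $e(-bm/n)$ to give $e(\eta m/n)$ with $\eta = -b + \tfrac12\sum_{j=1}^k v_j$. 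This is exactly (\ref{S-H-F}). The inequality (\ref{S-H-FB}) then follows from $\lvert e(\eta m/n)\rvert = 1$ and the triangle inequality applied to the outer sum over $m$:
\[
W_{H_b(k,s)}(1) \le \frac{2^k}{n}\sum_{m=1}^{n} \prod_{j=1}^{k}\left|\cos\left(\frac{\pi v_j m}{n}\right)\right|,
\]
and of course $W_{H_b(k,s)}(1)\ge 0$ automatically, being the number of codewords.

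There is essentially no obstacle here: the entire content is already contained in Theorem~\ref{BLCCWE} and Corollary~\ref{BLCCS}, and the only thing to verify is the bookkeeping of the phase factors. The one mild point worth stating explicitly is that $\eta$ need not be an integer, so $e(\eta m/n)$ is a genuine complex phase that cannot be discarded from (\ref{S-H-F}); it can only be bounded in modulus, which is precisely what (\ref{S-H-FB}) records. If anything, the more interesting (but optional) remaining task is the one flagged in the text after Theorem~\ref{W-H}: using identities for Fibonacci/Lucas numbers to evaluate the trigonometric product in closed form, which would genuinely simplify (\ref{S-H-F}) — but that is beyond what the corollary asserts.
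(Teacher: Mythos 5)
Your proposal is correct and matches the paper's approach exactly: the paper obtains Corollary~\ref{S-H} as an immediate specialization of Corollary~\ref{BLCCS} (equivalently, of Theorem~\ref{W-H} at $z=1$) with $a_j=v_j$ and $n=v_{k+1}$, and the half-angle factorization $1+e(v_jm/n)=2e\left(\tfrac{v_jm}{2n}\right)\cos\left(\tfrac{\pi v_jm}{n}\right)$ together with the triangle inequality is precisely the computation in the paper's proof of Corollary~\ref{BLCCS}. Your phase bookkeeping for $\eta$ and the remark that $e(\eta m/n)$ can only be bounded in modulus are both accurate.
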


\subsection{Weight enumerator of the Levenshtein code}

Theorem~\ref{BLCCWE} also allows for deriving an explicit formula for the weight enumerator of the Levenshtein code.

\begin{theorem}\label{W-L}
The weight enumerator of the Levenshtein code $L_b(k,n)$ is
\begin{align}\label{W-L-F}
W_{L_b(k,n)}(z)= \frac{1}{n}\mathlarger{\sum}_{m=1}^{n}e\left(\frac{-bm}{n}\right)\mathlarger{\prod}_{j=1}^{k}\left(1+ze\left(\frac{jm}{n}\right)\right).
\end{align}
\end{theorem}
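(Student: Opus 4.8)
The plan is to recognize the Levenshtein code $L_b(k,n)$ as a special case of the Binary Linear Congruence Code and then invoke Theorem~\ref{BLCCWE} directly. By definition, $L_b(k,n)$ consists of all binary $k$-tuples $\langle s_1,\ldots,s_k\rangle$ satisfying $\sum_{i=1}^k i s_i \equiv b \pmod{n}$, which is precisely the BLCC ${\cal C}$ with the choice of coefficients $a_j = j$ for $j=1,\ldots,k$ and modulus $n$. So the first step is simply to state this identification.

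Second, I would apply Theorem~\ref{BLCCWE} with $a_j=j$. Substituting $a_j = j$ into formula~(\ref{BLCCWEF}) gives
\begin{align*}
W_{L_b(k,n)}(z) = \frac{1}{n}\mathlarger{\sum}_{m=1}^{n} e\!\left(\frac{-bm}{n}\right)\mathlarger{\prod}_{j=1}^{k}\left(1+ze\!\left(\frac{jm}{n}\right)\right),
\end{align*}
which is exactly the claimed identity~(\ref{W-L-F}). No further manipulation is needed.

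There is essentially no obstacle here: the entire content is the observation that the Levenshtein constraint has the form of the linear congruence defining ${\cal C}$, so the theorem applies verbatim. The only thing to be careful about is a minor bookkeeping point — in the BLCC definition $k$ is the block length and $n$ is the modulus, and in the Levenshtein definition the modulus is also called $n$ while the block length is $k$, so the notation already matches and there is no risk of confusion. One could optionally remark that $v_j = j$ is the $s=1$ specialization of the Helberg multipliers, tying Theorem~\ref{W-L} to Theorem~\ref{W-H}, but this is not required for the proof. Thus the proof is a one-line deduction: cite the definition to establish that $L_b(k,n)$ is a BLCC with $a_j=j$, then cite Theorem~\ref{BLCCWE}.
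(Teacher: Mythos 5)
Your proposal is correct and matches the paper's (implicit) argument exactly: the paper states Theorem~\ref{W-L} without a separate proof, treating it as the immediate specialization of Theorem~\ref{BLCCWE} to the coefficients $a_j=j$, which is precisely your one-line deduction.
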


\begin{corollary}\label{S-L}
The size of the Levenshtein code $L_b(k,n)$ equals
\begin{align}\label{S-L-F}
W_{L_b(k,n)}(1)= \frac{2^k}{n}\mathlarger{\sum}_{m=1}^{n}e\left(\frac{\eta m}{n}\right)\mathlarger{\prod}_{j=1}^{k}\cos \left(\frac{\pi jm}{n}\right),
\end{align}
where $\eta=-b+\frac{1}{4}k(k+1)$. This implies that 
\begin{align}\label{S-L-FB}
W_{L_b(k,n)}(1)\leq \frac{2^k}{n}\mathlarger{\sum}_{m=1}^{n}\mathlarger{\prod}_{j=1}^{k}\left|\cos \left(\frac{\pi jm}{n}\right)\right|.
\end{align}
\end{corollary}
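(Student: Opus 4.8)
The plan is to obtain Corollary~\ref{S-L} as a direct specialization of Corollary~\ref{BLCCS}, since the Levenshtein code $L_b(k,n)$ is precisely the Binary Linear Congruence Code ${\cal C}$ with coefficients $a_j = j$ for $j = 1, \ldots, k$ and the given modulus $n$. First I would invoke Theorem~\ref{W-L} (itself an immediate consequence of Theorem~\ref{BLCCWE}) to write the weight enumerator $W_{L_b(k,n)}(z)$, and then set $z = 1$. At that point the problem reduces entirely to substituting $a_j = j$ into formula~(\ref{BLCCSF}) of Corollary~\ref{BLCCS}.

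The one genuine computation is evaluating the constant $\eta$ appearing in the cosine-product formula. From Corollary~\ref{BLCCS} we have $\eta = -b + \tfrac12 \sum_{j=1}^{k} a_j$, and with $a_j = j$ this becomes $\eta = -b + \tfrac12 \sum_{j=1}^{k} j = -b + \tfrac12 \cdot \tfrac{k(k+1)}{2} = -b + \tfrac14 k(k+1)$, which matches the statement. Substituting $a_j = j$ everywhere else in~(\ref{BLCCSF}) turns $\prod_{j=1}^{k}\cos(\pi a_j m / n)$ into $\prod_{j=1}^{k}\cos(\pi j m / n)$, yielding~(\ref{S-L-F}) verbatim. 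The nonnegativity of the resulting expression is inherited from the corresponding assertion $W_{\cal C}(1) \geq 0$ in Corollary~\ref{BLCCS}, which in turn holds because $W_{\cal C}(1)$ counts codewords.

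For the bound~(\ref{S-L-FB}), I would likewise specialize~(\ref{BLCCSFB}): since $|e(\eta m / n)| = 1$, the triangle inequality applied to the sum over $m$ in~(\ref{S-L-F}) gives
\begin{align*}
W_{L_b(k,n)}(1) \leq \frac{2^k}{n}\mathlarger{\sum}_{m=1}^{n}\left|e\left(\frac{\eta m}{n}\right)\right|\mathlarger{\prod}_{j=1}^{k}\left|\cos \left(\frac{\pi jm}{n}\right)\right| = \frac{2^k}{n}\mathlarger{\sum}_{m=1}^{n}\mathlarger{\prod}_{j=1}^{k}\left|\cos \left(\frac{\pi jm}{n}\right)\right|,
\end{align*}
as claimed. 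There is no real obstacle here; the only point requiring a moment's care is the arithmetic simplification $\sum_{j=1}^k j = k(k+1)/2$ feeding into $\eta$, and confirming that the indexing conventions for the Levenshtein code ($k$-tuples, modulus $n$, coefficients $1, 2, \ldots, k$) line up exactly with the BLCC setup so that Corollary~\ref{BLCCS} applies without modification.
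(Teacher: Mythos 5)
Your proposal is correct and matches the paper's (implicit) argument exactly: the paper presents Corollary~\ref{S-L} as an immediate specialization of Corollary~\ref{BLCCS} to the coefficients $a_j=j$, with the only computation being $\eta=-b+\tfrac12\sum_{j=1}^k j=-b+\tfrac14k(k+1)$ and the bound following from the triangle inequality since $\left|e\left(\eta m/n\right)\right|=1$. Nothing further is needed.
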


\subsection{The size of the Shifted Varshamov--Tenengolts code}

Next, using Theorem~\ref{W-L} once again, we give an explicit formula for the size of the Shifted Varshamov--Tenengolts code $SVT_{b,r}(k,n)$. Note that $SVT_{b,r}(k,n)$ represents the set of codewords in the Levenshtein code with even Hamming weight (when $r=0$) or with odd Hamming weight (when $r=1$).      

\begin{theorem}\label{S-SVT}
If $r=0$ then the size of the Shifted Varshamov--Tenengolts code $SVT_{b,0}(k,n)$ is
\begin{align}\label{S-SVT-F0}
|SVT_{b,0}(k,n)|= \frac{2^{k-1}}{n}\mathlarger{\sum}_{m=1}^{n}e\left(\frac{\eta m}{n}\right)\left(A+(-1)^{k}B\right),
\end{align}
and if $r=1$ then the size of $SVT_{b,1}(k,n)$ is
\begin{align}\label{S-SVT-F1}
|SVT_{b,1}(k,n)|= \frac{2^{k-1}}{n}\mathlarger{\sum}_{m=1}^{n}e\left(\frac{\eta m}{n}\right)\left(A+(-1)^{k+1}B\right),
\end{align}
where $\eta=-b+\frac{1}{4}k(k+1)$,
$$
A=\mathlarger{\prod}_{j=1}^{k}\cos \left(\frac{\pi jm}{n}\right) \; \text{and} \; B=\mathlarger{\prod}_{j=1}^{k}i \sin \left(\frac{\pi jm}{n}\right). 
$$
\end{theorem}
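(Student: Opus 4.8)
The plan is to reduce the size of $SVT_{b,r}(k,n)$ to a weighted combination of the Levenshtein-code generating function from Theorem~\ref{W-L} evaluated at $z=1$ and at $z=-1$, exploiting the standard parity-selection trick. Concretely, for any code $C\subseteq\{0,1\}^k$ the number of codewords of even weight is $\tfrac12\bigl(W_C(1)+W_C(-1)\bigr)$ and of odd weight is $\tfrac12\bigl(W_C(1)-W_C(-1)\bigr)$. Since $SVT_{b,0}(k,n)$ is exactly the set of even-weight codewords of $L_b(k,n)$ and $SVT_{b,1}(k,n)$ the odd-weight ones, I would write
\begin{align*}
|SVT_{b,r}(k,n)|=\tfrac12\Bigl(W_{L_b(k,n)}(1)+(-1)^{r}\,W_{L_b(k,n)}(-1)\Bigr).
\end{align*}

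Next I would plug $z=\pm1$ into the formula \eqref{W-L-F}. For $z=1$ the computation in Corollary~\ref{S-L} already gives
\begin{align*}
W_{L_b(k,n)}(1)=\frac{2^{k}}{n}\sum_{m=1}^{n}e\!\left(\frac{\eta m}{n}\right)\prod_{j=1}^{k}\cos\!\left(\frac{\pi jm}{n}\right)=\frac{2^{k}}{n}\sum_{m=1}^{n}e\!\left(\frac{\eta m}{n}\right)A,
\end{align*}
with $\eta=-b+\tfrac14 k(k+1)$ and $A$ as in the statement. For $z=-1$ I would run the same factor-extraction: write $1-e(a_jm/n)= -e(a_jm/(2n))\bigl(e(-a_jm/(2n))-e(a_jm/(2n))\bigr)= -e(a_jm/(2n))\cdot(-2i)\sin(\pi a_j m/n)$, so each factor contributes $2i\sin(\pi jm/n)$ together with the half-angle phase $e(jm/(2n))$; multiplying the $k$ phases again produces $e\bigl(\tfrac{m}{2n}\sum_j j\bigr)$, i.e. the same $e(\eta m/n)$ after absorbing $-b$. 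There is a sign subtlety: $1-e(\theta)$ versus $1+e(\theta)$ differ, and carrying the ``$-1$'' through $k$ factors yields an overall $(-1)^{k}$ relative to the $z=1$ case; combined with the $\prod i\sin$ this is exactly $(-1)^{k}B$. Hence
\begin{align*}
W_{L_b(k,n)}(-1)=\frac{2^{k}}{n}\sum_{m=1}^{n}e\!\left(\frac{\eta m}{n}\right)(-1)^{k}B.
\end{align*}

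Finally I would substitute these two expressions into the parity formula and collect the common factor $\tfrac{2^{k-1}}{n}\sum_{m=1}^{n}e(\eta m/n)$, obtaining $A+(-1)^{k}B$ when $r=0$ and $A-(-1)^{k}B=A+(-1)^{k+1}B$ when $r=1$, which matches \eqref{S-SVT-F0} and \eqref{S-SVT-F1}. The only place demanding care — the ``main obstacle,'' though a mild one — is bookkeeping the phases and signs when evaluating at $z=-1$: one must make sure the half-angle factor $e(jm/(2n))$ is pulled out of $1-e(jm/n)$ with the correct sign so that the accumulated phase is identical to the $z=1$ computation (thereby giving the same $\eta$), and that the $k$ copies of $-1$ are tracked to produce precisely the $(-1)^{k}$ in front of $B$. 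Everything else is a direct specialization of Theorem~\ref{W-L} and the elementary even/odd decomposition, so no new ideas are needed beyond what is already in the excerpt.
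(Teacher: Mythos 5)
Your proposal is correct and follows exactly the paper's route: the paper likewise obtains $|SVT_{b,r}(k,n)|$ as $\frac{1}{2}\bigl(W_{L_b(k,n)}(z)+(-1)^{r}W_{L_b(k,n)}(-z)\bigr)\big|_{z=1}$ and then simply invokes Theorem~\ref{W-L}, so your explicit evaluation at $z=-1$ is, if anything, more detailed than the paper's one-line argument. (One cosmetic note: the correct factorization is $1-e(\theta)=e(\theta/2)\bigl(e(-\theta/2)-e(\theta/2)\bigr)=-2i\,e(\theta/2)\sin(\pi\theta)$ with no leading minus sign, so each factor contributes $-2i\sin$ rather than $+2i\sin$; the accumulated $(-2)^{k}=(-1)^{k}2^{k}$ is what produces the stated $(-1)^{k}B$, and your final formulas are right.)
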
 

\begin{proof}
To find the number of codewords in the Levenshtein code $L_b(k,n)$ with even Hamming weight (when $r=0$) and with odd Hamming weight (when $r=1$), we proceed as follows. If $r=0,$ then the size of $SVT_{b,0}(k,n)$ equals $\frac{1}{2}(W_{L_b(k,n)}(z)+W_{L_b(k,n)}(-z))|_{z=1}$, and if $r=1$, the size of $SVT_{b,1}(k,n)$ equals $\frac{1}{2}(W_{L_b(k,n)}(z)-W_{L_b(k,n)}(-z))|_{z=1}$. Invoking Theorem~\ref{W-L} proves the claimed result.
\end{proof}

\subsection{Weight enumerators of VT codes} 

Using Theorem~\ref{BLCCWE} we re-derive the formula for the weight enumerator of the Varshamov--Tenengolts code. Due to the special structure of the coefficients int he congruences, our formula simplifies significantly.

We start with the following lemma.

\begin{lemma}\label{lem: cyclo 1}
Let $n$ be a positive integer and $m$ be a non-negative integer. Then, we have 
$$
\mathlarger{\prod}_{j=1}^{n}\left(1-ze\left(\frac{jm}{n}\right)\right)=(1-z^{\frac{n}{d}})^d,
$$
where $d=\gcd(m,n)$.
\end{lemma}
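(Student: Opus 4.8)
The plan is to recognize the product $\prod_{j=1}^{n}\left(1-ze\left(\frac{jm}{n}\right)\right)$ as a product over the $m$-th powers of the $n$-th roots of unity, and then exploit the fact that these powers cycle through a smaller set of roots with multiplicity. Write $d=\gcd(m,n)$ and $n=d n'$, $m=d m'$ with $\gcd(m',n')=1$. First I would observe that as $j$ runs over $1,\dots,n$, the quantity $jm/n = jm'/n'$ modulo $1$ takes each value in $\{0/n', 1/n', \dots, (n'-1)/n'\}$ exactly $d$ times, because multiplication by $m'$ is a bijection on $\Z_{n'}$. Hence $e\!\left(\frac{jm}{n}\right)$ ranges over the $n'$-th roots of unity, each appearing exactly $d$ times as $j$ ranges over a complete residue system modulo $n$.

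Next I would assemble the product. Grouping the $n$ factors accordingly gives
\begin{align*}
\prod_{j=1}^{n}\left(1-ze\left(\tfrac{jm}{n}\right)\right)=\left(\prod_{t=0}^{n'-1}\left(1-ze\left(\tfrac{t}{n'}\right)\right)\right)^{d}.
\end{align*}
Then I would invoke the standard factorization $\prod_{t=0}^{n'-1}(X-\zeta^{t})=X^{n'}-1$ for a primitive $n'$-th root of unity $\zeta=e(1/n')$. Applying this with $X=1/z$ (for $z\neq 0$) and clearing denominators, or more cleanly substituting $X$ and rearranging, yields $\prod_{t=0}^{n'-1}(1-ze(t/n'))=1-z^{n'}=1-z^{n/d}$. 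Raising to the $d$-th power gives the claimed identity $(1-z^{n/d})^{d}$. The edge case $z=0$ is trivial (both sides equal $1$), and the case $m=0$ (so $d=n$, $n'=1$) gives $(1-z)^{n}$ on both sides, consistent with the formula.

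I do not anticipate a serious obstacle here; the only point requiring care is the counting claim that each $n'$-th root of unity is hit exactly $d$ times, which rests on $\gcd(m',n')=1$ making $j\mapsto jm'\bmod n'$ a permutation of $\Z_{n'}$, together with the fact that each residue class mod $n'$ contains exactly $d$ of the integers $1,\dots,n$. A secondary subtlety is handling the polynomial identity over $\C[z]$ versus evaluating at a nonzero scalar; I would phrase the argument purely as an identity of polynomials in $z$ by noting $\prod_{t=0}^{n'-1}(1-z\zeta^{t})$ and $1-z^{n'}$ are monic-up-to-sign polynomials of the same degree $n'$ in $z$ with the same roots (the $n'$-th roots of unity, via $z=\zeta^{-t}$) and the same constant term $1$, hence are equal.
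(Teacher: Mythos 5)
Your proof is correct and follows essentially the same route as the paper's: both reduce modulo $d=\gcd(m,n)$ so that the product collapses to the $d$-th power of a product over the $(n/d)$-th roots of unity (using that multiplication by $m/d$ permutes $\Z_{n/d}$), and then invoke the factorization $1-z^{n'}=\prod_{t=0}^{n'-1}\left(1-ze\left(\frac{t}{n'}\right)\right)$. The only cosmetic difference is that the paper first splits the range $j=1,\dots,n$ into $d$ blocks by periodicity and then applies the coprimality permutation, whereas you fold both steps into a single multiplicity count.
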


\begin{proof}
It is well-known that (see, e.g., \cite[p. 167]{STAN})

$$
1-z^n=\mathlarger{\prod}_{j=1}^{n}\left(1-ze^{2\pi ij/n}\right).
$$
Letting $d=\gcd(m,n)$, we obtain

\begin{align*}
\mathlarger{\prod}_{j=1}^{n}\left(1-ze\left(\frac{jm}{n}\right)\right)&=\mathlarger{\prod}_{j=1}^{n}\left(1-ze^{2\pi ijm/n}\right) \\&= \mathlarger{\prod}_{j=1}^{n} \left(1-ze^{2\pi ij\frac{m/d}{n/d}}\right)\\&= \left(\mathlarger{\prod}_{j=1}^{n/d}\left(1-ze^{2\pi ij\frac{m/d}{n/d}}\right)\right)^d\\
&{\stackrel{\gcd(\frac{m}{d},\frac{n}{d})=1}{=}} \left(\mathlarger{\prod}_{j=1}^{n/d}\left(1-ze^{\frac{2\pi ij}{n/d}}\right)\right)^d\\&=(1-z^{\frac{n}{d}})^d.
\end{align*}
\end{proof}

\begin{theorem}\label{W-VT}
The weight enumerator of the VT code $VT_b(n)$ is
\begin{align}\label{W-VT-F}
W_{VT_b(n)}(z)= \frac{1}{(z+1)(n+1)}\mathlarger{\sum}_{d\, \mid \, n+1}c_{d}(b)(1-(-z)^d)^{\frac{n+1}{d}}.
\end{align}
\end{theorem}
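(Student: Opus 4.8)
The starting point is Theorem~\ref{BLCCWE} applied to the VT code $VT_b(n)$, where the coefficients are $a_j=j$ for $j=1,\ldots,n$ and the modulus is $n+1$. This gives
\[
W_{VT_b(n)}(z)=\frac{1}{n+1}\sum_{m=1}^{n+1}e\!\left(\frac{-bm}{n+1}\right)\prod_{j=1}^{n}\left(1+ze\!\left(\frac{jm}{n+1}\right)\right).
\]
The first move I would make is to turn the product over $j=1,\ldots,n$ into a product over $j=1,\ldots,n+1$ so that Lemma~\ref{lem: cyclo 1} becomes applicable. Since $e\!\left(\frac{(n+1)m}{n+1}\right)=1$, the missing factor is $1+z$, so
\[
\prod_{j=1}^{n}\left(1+ze\!\left(\frac{jm}{n+1}\right)\right)=\frac{1}{1+z}\prod_{j=1}^{n+1}\left(1+ze\!\left(\frac{jm}{n+1}\right)\right).
\]
Next I would replace $z$ by $-z$ in Lemma~\ref{lem: cyclo 1} (with modulus $n+1$ in place of $n$): this yields $\prod_{j=1}^{n+1}\bigl(1+ze(jm/(n+1))\bigr)=(1-(-z)^{(n+1)/d})^{d}$ where $d=\gcd(m,n+1)$. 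Substituting back gives
\[
W_{VT_b(n)}(z)=\frac{1}{(1+z)(n+1)}\sum_{m=1}^{n+1}e\!\left(\frac{-bm}{n+1}\right)\left(1-(-z)^{\frac{n+1}{d}}\right)^{d},\qquad d=\gcd(m,n+1).
\]

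The second half of the argument is to collect the terms in the sum over $m$ according to the value of $d=\gcd(m,n+1)$. For each divisor $d\mid n+1$, the integers $m\in\{1,\ldots,n+1\}$ with $\gcd(m,n+1)=d$ are exactly $m=d m'$ with $\gcd(m',(n+1)/d)=1$ and $1\le m'\le (n+1)/d$. For such $m$, the factor $\bigl(1-(-z)^{(n+1)/d}\bigr)^{d}$ depends only on $d$, and $e\!\left(\frac{-bm}{n+1}\right)=e\!\left(\frac{-bm'}{(n+1)/d}\right)$. Summing over the relevant $m'$ gives $\sum_{\gcd(m',(n+1)/d)=1}e\!\left(\frac{-bm'}{(n+1)/d}\right)=c_{(n+1)/d}(-b)=c_{(n+1)/d}(b)$ by the definition~\eqref{def1} of the Ramanujan sum and its evenness $c_n(-m)=c_n(m)$. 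Hence
\[
W_{VT_b(n)}(z)=\frac{1}{(1+z)(n+1)}\sum_{d\mid n+1}c_{\frac{n+1}{d}}(b)\left(1-(-z)^{\frac{n+1}{d}}\right)^{d}.
\]
Finally, reindexing the divisor sum $d\mapsto (n+1)/d$ turns this into the claimed form $\frac{1}{(1+z)(n+1)}\sum_{d\mid n+1}c_{d}(b)\bigl(1-(-z)^{d}\bigr)^{(n+1)/d}$.

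**Main obstacle.** The only genuinely delicate point is the regrouping step: one must verify carefully that as $m$ ranges over the residues with $\gcd(m,n+1)=d$, the normalized variable $m'=m/d$ ranges exactly over a reduced residue system modulo $(n+1)/d$, and that $e(-bm/(n+1))$ correctly collapses to $e(-bm'/((n+1)/d))$ — this is where the identification with the Ramanujan sum $c_{(n+1)/d}(b)$ is made, and it is easy to be sloppy about whether one lands on $c_{(n+1)/d}(b)$ or $c_{(n+1)/d}(-b)$ (harmless here by evenness) and about the final divisor reindexing. A secondary, purely bookkeeping concern is that the factor $\frac{1}{1+z}$ pulled out at the start is a formal manipulation of polynomials; since $(1-(-z)^{(n+1)/d})^{d}$ is divisible by $(1+z)$ whenever $(n+1)/d$ is even, and the $d=n+1$ term contributes $(1-(-z))^{\,1\cdot\text{something}}$... more precisely one should note that the full sum is divisible by $1+z$ in $\mathbb{Z}[z]$ (it equals $(1+z)$ times the genuine polynomial $W_{VT_b(n)}(z)$), so the division is legitimate and no spurious pole appears at $z=-1$. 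I would remark on this briefly rather than belabor it.
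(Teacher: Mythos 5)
Your proposal is correct and follows essentially the same route as the paper: apply Theorem~\ref{BLCCWE}, extend the product to $j=1,\ldots,n+1$ at the cost of a factor $1+z$, evaluate it via Lemma~\ref{lem: cyclo 1} with $z\mapsto -z$, group $m$ by $d=\gcd(m,n+1)$ to produce the Ramanujan sums $c_{(n+1)/d}(b)$, and reindex the divisor sum. The only cosmetic difference is that the paper keeps the identity in the form $(z+1)(n+1)W_{VT_b(n)}(z)=\cdots$ rather than dividing by $1+z$, which sidesteps the divisibility remark you (correctly) include.
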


\begin{proof}
Using Theorem~\ref{BLCCWE} we get
\begin{align*}
W_{VT_b(n)}(z) = \frac{1}{n+1}\mathlarger{\sum}_{m=1}^{n+1}e\left(\frac{-bm}{n+1}\right)\mathlarger{\prod}_{j=1}^{n}\left(1+ze\left(\frac{jm}{n+1}\right)\right).
\end{align*}
Therefore,
\begin{align*}
\ & (z+1)(n+1)W_{VT_b(n)}(z)\\
= \ & \mathlarger{\sum}_{m=1}^{n+1}e\left(\frac{-bm}{n+1}\right)\mathlarger{\prod}_{j=1}^{n+1}\left(1+ze\left(\frac{jm}{n+1}\right)\right)\\
= \ & \mathlarger{\sum}_{d\, \mid \, n+1}\mathlarger{\sum}_{\substack{m=1 \\ \gcd(m, n+1)=d}}^{n+1}e\left(\frac{-bm}{n+1}\right)\mathlarger{\prod}_{j=1}^{n+1}\left(1+ze\left(\frac{jm}{n+1}\right)\right).
\end{align*}
Now, using Lemma~\ref{lem: cyclo 1} we get
\begin{align*}
\ & (z+1)(n+1)W_{VT_b(n)}(z)\\
= \ & \mathlarger{\sum}_{d\, \mid \, n+1}\mathlarger{\sum}_{\substack{m=1 \\ \gcd(m, n+1)=d}}^{n+1}e\left(\frac{-bm}{n+1}\right)(1-(-z)^{\frac{n+1}{d}})^d\\ 
{\stackrel{m'=m/d}{=}} \ & \mathlarger{\sum}_{d\, \mid \, n+1}\mathlarger{\sum}_{\substack{m'=1 \\ \gcd(m', (n+1)/d)=1}}^{(n+1)/d}e\left(\frac{-bm'}{(n+1)/d}\right)(1-(-z)^{\frac{n+1}{d}})^d\\
= \ & \mathlarger{\sum}_{d\, \mid \, n+1}c_{(n+1)/d}(-b)(1-(-z)^{\frac{n+1}{d}})^d\\
= \ & \mathlarger{\sum}_{d\, \mid \, n+1}c_{(n+1)/d}(b)(1-(-z)^{\frac{n+1}{d}})^d\\
= \ & \mathlarger{\sum}_{d\, \mid \, n+1}c_{d}(b)(1-(-z)^d)^{\frac{n+1}{d}}. 
\end{align*}
\end{proof}

Based on Theorem~\ref{W-VT}, one can easily obtain the following explicit formula for the general term of the weight distribution of VT codes. This result was recently proved using a different method by Bibak et al.~\cite{BKS7} (for a related earlier result, see also~\cite{DOAN}).

\begin{theorem}\label{VT exa k1s}
The number of codewords with Hamming weight $t$ in the Varshamov--Tenengolts code $VT_b(n)$ equals
\begin{align} \label{VT exa k1s: for}
N_t(VT_b(n))=\frac{(-1)^t}{n+1}\mathlarger{\sum}_{d\, \mid \, n+1}(-1)^{\lfloor\frac{t}{d}\rfloor}c_{d}(b)\binom{\frac{n+1}{d}-1}{\lfloor\frac{t}{d}\rfloor}.
\end{align}
\end{theorem}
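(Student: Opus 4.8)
The plan is to extract the coefficient of $z^t$ on both sides of the weight enumerator formula in Theorem~\ref{W-VT}. The left-hand side contributes $N_t(VT_b(n))$ directly, so the work is entirely on the right-hand side, where I would expand $\frac{1}{(z+1)(n+1)}\sum_{d\mid n+1}c_d(b)(1-(-z)^d)^{(n+1)/d}$ as a power series in $z$ and read off the coefficient. The first step is to handle the factor $1/(z+1)$: write $\frac{1}{z+1}=\sum_{i\geq 0}(-1)^i z^i$, a formal power series identity valid at the level of coefficient extraction. The second step is to expand each $(1-(-z)^d)^{(n+1)/d}$ by the binomial theorem as $\sum_{\ell\geq 0}\binom{(n+1)/d}{\ell}(-1)^\ell(-z)^{d\ell}=\sum_{\ell\geq 0}\binom{(n+1)/d}{\ell}(-1)^{\ell(d+1)}z^{d\ell}$, noting that only exponents divisible by $d$ appear.

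The third step is the convolution: the coefficient of $z^t$ in the product $\frac{1}{z+1}(1-(-z)^d)^{(n+1)/d}$ is $\sum_{\ell:\, d\ell\leq t}\binom{(n+1)/d}{\ell}(-1)^{\ell(d+1)}(-1)^{t-d\ell}$. Here I would simplify the sign: $(-1)^{\ell(d+1)}(-1)^{t-d\ell}=(-1)^{t}(-1)^{\ell(d+1)-d\ell}=(-1)^t(-1)^{\ell}$, so the inner sum becomes $(-1)^t\sum_{\ell=0}^{\lfloor t/d\rfloor}(-1)^\ell\binom{(n+1)/d}{\ell}$. The key combinatorial identity I would invoke is the partial-sum (hockey-stick-type) formula $\sum_{\ell=0}^{L}(-1)^\ell\binom{N}{\ell}=(-1)^L\binom{N-1}{L}$, which follows from Pascal's rule by telescoping. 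Applying it with $N=(n+1)/d$ and $L=\lfloor t/d\rfloor$ turns the inner sum into $(-1)^{\lfloor t/d\rfloor}\binom{(n+1)/d-1}{\lfloor t/d\rfloor}$. The fourth step is to reassemble: multiply by $c_d(b)$, sum over $d\mid n+1$, divide by $n+1$, and collect the global factor $(-1)^t$, yielding exactly \eqref{VT exa k1s: for}.

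The main obstacle is getting the sign bookkeeping exactly right — there are three interacting sources of signs ($(-1)^i$ from $1/(z+1)$, the $(-1)^\ell$ and the $(-z)^{d\ell}$ from the binomial expansion) and a parity subtlety in whether $d$ is even or odd, so I would double-check the reduction $(-1)^{\ell(d+1)-d\ell}=(-1)^\ell$ carefully (it holds because $\ell(d+1)-d\ell=\ell$). A secondary point worth a sentence is the validity of using formal power series: since $VT_b(n)\subseteq\{0,1\}^n$, the polynomial $W_{VT_b(n)}(z)$ has degree at most $n$ and the right-hand side of Theorem~\ref{W-VT} is a genuine polynomial (the factor $z+1$ divides the numerator), so equating coefficients of $z^t$ for $0\leq t\leq n$ is legitimate and there are no convergence concerns. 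Everything else is routine algebra.
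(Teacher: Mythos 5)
Your proposal is correct and follows essentially the same route as the paper, whose proof is the single remark that the result follows by using the binomial theorem to extract the relevant coefficient from the sum in Theorem~\ref{W-VT}. You supply exactly the details the paper omits --- the formal expansion $\frac{1}{1+z}=\sum_{i\ge 0}(-1)^iz^i$, the sign reduction $(-1)^{\ell(d+1)-d\ell}=(-1)^\ell$, and the telescoped identity $\sum_{\ell=0}^{L}(-1)^\ell\binom{N}{\ell}=(-1)^L\binom{N-1}{L}$ --- and all of them check out.
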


\begin{proof}
The proof reduces to using the binomial theorem to find the coefficient of $z^{t+1}$ in the sum of (\ref{W-VT-F}). 
\end{proof}

\begin{corollary}\label{VT exa tot}
The size of the VT code $VT_b(n)$ equals
\begin{align} \label{VT exa tot: for}
W_{VT_b(n)}(1)=\frac{1}{2(n+1)}\mathlarger{\sum}_{\substack{d\, \mid \, n+1 \\ d \; \textnormal{odd}}}c_{d}(b)2^{\frac{n+1}{d}}.
\end{align}
\end{corollary}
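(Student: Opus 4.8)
The plan is to specialize Theorem~\ref{W-VT} to $z=1$. By definition $W_{VT_b(n)}(1)$ is the value of the weight enumerator at $z=1$, and the factor $(z+1)$ occurring in the denominator of (\ref{W-VT-F}) takes the nonzero value $2$ there, so the substitution is perfectly legitimate and gives
\begin{align*}
W_{VT_b(n)}(1)= \frac{1}{2(n+1)}\mathlarger{\sum}_{d\,\mid\,n+1} c_{d}(b)\bigl(1-(-1)^d\bigr)^{\frac{n+1}{d}}.
\end{align*}

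Next I would split the sum according to the parity of $d$. If $d$ is even then $(-1)^d=1$, hence $1-(-1)^d=0$; since $\frac{n+1}{d}\geq 1$, the corresponding summand vanishes. If $d$ is odd then $1-(-1)^d=2$, so the summand equals $c_{d}(b)\,2^{(n+1)/d}$. Collecting the surviving terms produces exactly (\ref{VT exa tot: for}), which is the claim.

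There is essentially no obstacle in this argument; the only point worth a word is the legitimacy of evaluating the right-hand side of (\ref{W-VT-F}) at $z=1$, and this is immediate because $z+1$ has no zero there. (If one prefers, one can also observe that each polynomial $\bigl(1-(-z)^d\bigr)^{(n+1)/d}$ is divisible by $1+z$ — for odd $d$ since $1+z^d=(1+z)(1-z+\cdots+z^{d-1})$, and for even $d$ since $1-z^d$ has $1+z$ as a factor — so the expression in Theorem~\ref{W-VT} is genuinely a polynomial in $z$; but this is not needed for the evaluation at $z=1$.) An alternative route would be to sum the formula of Theorem~\ref{VT exa k1s} over $t=0,\ldots,n$, but isolating the surviving binomial contributions is more laborious than the direct specialization above, so I would not pursue it.
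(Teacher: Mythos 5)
Your proof is correct and is the evident intended derivation: the paper states the corollary without proof, and the natural route is exactly yours, namely substituting $z=1$ into (\ref{W-VT-F}) and observing that the terms with $d$ even vanish while those with $d$ odd contribute $c_d(b)2^{(n+1)/d}$, with the prefactor $(z+1)(n+1)$ becoming $2(n+1)$. Your side remark about divisibility by $1+z$ is a nice sanity check but, as you say, not needed.
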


\begin{rema}\label{VTq exa tot}
Ginzburg~\cite{GIN} in 1967 proved the following explicit formula for the size $|VT_{b,q}(n)|$ of the $q$-ary, rather than binary, Varshamov--Tenengolts code $VT_{b,q}(n)$, where $q$ is an arbitrary positive integer:
\begin{align} \label{VTq exa tot: for}
|VT_{b,q}(n)|=\frac{1}{q(n+1)}\mathlarger{\sum}_{\substack{d\, \mid \, n+1 \\ \gcd(d,q)=1}}c_{d}(b)q^{\frac{n+1}{d}}.
\end{align}
Formula~(\ref{VTq exa tot: for}) (in fact, a more complicated version of it) was later rediscovered by Stanley and Yoder~\cite{STYO} in 1973. Formula~(\ref{VTq exa tot: for}) for the binary case $q=2$ was also rediscovered by Sloane~\cite{SLO} in 2002. Bibak et al.~\cite{BKS7} derived the binary case formula as a corollary of a general number theory problem.       
\end{rema}

\begin{rema}
Since for all integers $m$ and $n$ with $n \geq 1$ one has $c_n(m) \leq \varphi (n)$, from (\ref{VTq exa tot: for}) it is clear that the maximum number of codewords in the $q$-ary Varshamov--Tenengolts code $VT_{b,q}(n)$ is obtained for $b=0$, that is, 
$$
|VT_{0,q}(n)|=\frac{1}{q(n+1)}\mathlarger{\sum}_{\substack{d\, \mid \, n+1 \\ \gcd(d,q)=1}}\varphi (d)q^{\frac{n+1}{d}} \geq |VT_{b,q}(n)|,
$$
for all $b$. This result was originally proved by Ginzburg~\cite{GIN}.
\end{rema}

\begin{rema}
Setting $d=1$ in Formula~(\ref{VT exa tot: for}) gives the bound
$$
|VT_0(n)| \geq \frac{2^n}{n+1}.
$$
On the other hand, by a result of Levenshtein~\cite{LEV1}, the size of the largest single deletion correcting binary code of length $n$, where $n$ is sufficiently large, is roughly $\frac{2^n}{n}$. Therefore, as it is well-known, the VT-codes $VT_0(n)$, for sufficiently large $n$, are close to optimal.
\end{rema}

\section*{Acknowledgements}
The authors would like to thank the reviewers for helpful comments that improved the presentation of this paper. This work was supported in part by the Center for Science of Information (CSoI), an NSF Science and Technology Center, under grant agreement CCR-0939370, and by the NSF grant CCF1618366.

\begin{IEEEbiographynophoto}{Khodakhast Bibak}
is an Assistant Professor at the Department of Computer Science and Software Engineering at Miami University. He was a Postdoctoral Research Associate (2017--2018) in the Coordinated Science Laboratory at the University of Illinois at Urbana-Champaign. Khodakhast received his PhD in Computer Science from the University of Victoria in 2017, and his MMath degree in Combinatorics \& Optimization from the University of Waterloo in 2013. His research interests are mainly cryptography, information security, coding and information theory, and discrete mathematics.
\end{IEEEbiographynophoto}

\begin{IEEEbiographynophoto}{Olgica Milenkovic}
is a professor of Electrical and Computer Engineering at the University of Illinois, Urbana-Champaign (UIUC), and Research Professor at the Coordinated Science Laboratory. She obtained her Masters Degree in Mathematics in 2001 and PhD in Electrical Engineering in 2002, both from the University of Michigan, Ann Arbor. Prof. Milenkovic heads a group focused on addressing unique interdisciplinary research challenges spanning the areas of algorithm design and computing, bioinformatics, coding theory, machine learning and signal processing. Her scholarly contributions have been recognized by multiple awards, including the NSF Faculty Early Career Development (CAREER) Award, the DARPA Young Faculty Award, the Dean's Excellence in Research Award, and several best paper awards. In 2013, she was elected a UIUC Center for Advanced Study Associate and Willett Scholar while in 2015 she was elected a Distinguished Lecturer of the Information Theory Society. In 2018 she became an IEEE Fellow. She has served as Associate Editor of the IEEE Transactions of Communications, the IEEE Transactions on Signal Processing, the IEEE Transactions on Information Theory and the IEEE Transactions on Molecular, Biological and Multi-Scale Communications. In 2009, she was the Guest Editor in Chief of a special issue of the IEEE Transactions on Information Theory on Molecular Biology and Neuroscience.
\end{IEEEbiographynophoto}


\begin{thebibliography}{99}

\bibitem{APFC} K.~A.~S.~Abdel-Ghaffar, F.~Paluncic, H.~C.~Ferreira, and W.~A.~Clarke, On Helberg's generalization of the Levenshtein code for multiple deletion/insertion error correction, {\it IEEE Trans. Inform. Theory} {\bf 58} (2012), 1804--1808.

\bibitem{BKS2} K.~Bibak, B.~M.~Kapron, and V.~Srinivasan, Counting surface-kernel epimorphisms from a co-compact Fuchsian group to a cyclic group with motivations from string theory and QFT, {\it Nuclear Phys. B} {\bf 910} (2016), 712--723.

\bibitem{BKS4} K.~Bibak, B.~M.~Kapron, and V.~Srinivasan, MMH$^*$ with arbitrary modulus is always almost-universal, {\it Inform. Process. Lett.} {\bf 116} (2016), 481--483.

\bibitem{BKS7} K.~Bibak, B.~M.~Kapron, and V.~Srinivasan, Unweighted linear congruences with distinct coordinates and the Varshamov--Tenengolts codes, {\it Des. Codes Cryptogr.} {\bf 86} (2018), 1893--1904.

\bibitem{BKSTT} K.~Bibak, B.~M.~Kapron, V.~Srinivasan, R.~Tauraso, and L.~T\'oth, Restricted linear congruences, {\it J. Number Theory} {\bf 171} (2017), 128--144.

\bibitem{BKSTT2} K.~Bibak, B.~M.~Kapron, V.~Srinivasan, and L.~T\'oth, On an almost-universal hash function family with applications to authentication and secrecy codes, {\it Internat. J. Found. Comput. Sci.} {\bf 29} (2018), 357--375.

\bibitem{BM} K.~Bibak and O.~Milenkovic, Weight enumerators of some classes of deletion correcting codes, \emph{ISIT 2018}, 431--435.

\bibitem{BISH1} K.~Bibak and M.~H.~Shirdareh Haghighi, Some trigonometric identities involving Fibonacci and Lucas numbers, {\it J. Integer Seq.} {\bf 12} (2009), Article 09.8.4.

\bibitem{BGZ17} J.~Brakensiek, V.~Guruswami, and S.~Zbarsky, Efficient low-redundancy codes for correcting multiple deletions, \emph{IEEE Trans. Inform. Theory} {\bf 64} (2018), 3403--3410.

\bibitem{BGH17} B.~Bukh, V.~Guruswami, and J.~H{\aa}stad, An improved bound on the fraction of correctable deletions, \emph{IEEE Trans. Inform. Theory} {\bf 63} (2017), 93--103. 

\bibitem{COH0} E.~Cohen, A class of arithmetical functions, {\it Proc. Natl. Acad. Sci. USA} {\bf 41} (1955), 939--944.

\bibitem{CORA} S.~D.~Constantin and T.~R.~N.~Rao, On the theory of binary asymmetric error correcting codes, \emph{Inform. Contr.} {\bf 40} (1979), 20--36.

\bibitem{CUKI} D.~Cullina and N.~Kiyavash, An improvement to Levenshtein's upper bound on the cardinality of deletion correcting codes, {\it IEEE Trans. Inform. Theory} {\bf 60} (2014), 3862--3870.

\bibitem{CKK1} D.~Cullina, N.~Kiyavash, and A.~A.~Kulkarni, Restricted composition deletion correcting codes, \emph{IEEE Trans. Inform. Theory} {\bf 62} (2016), 4819--4832.

\bibitem{CKK2} D.~Cullina, A.~A.~Kulkarni, and N.~Kiyavash, A coloring approach to constructing deletion correcting codes from constant weight subgraphs, \emph{ISIT 2012}, 513--517.

\bibitem{DEPI} Ph.~Delsarte and Ph.~Piret, Spectral enumerators for certain additive-error-correcting codes over integer alphabets, \emph{Inform. Contr.} {\bf 48} (1981), 193--210.

\bibitem{DOAN} L.~Dolecek and V.~Anantharam, Repetition error correcting sets: Explicit constructions and prefixing methods, {\it SIAM J. Discrete Math.} {\bf 23} (2010), 2120--2146.

\bibitem{DR} A.~Droll, A classification of Ramanujan unitary Cayley graphs, {\it Electron. J. Combin.} {\bf 17} (2010), \#N29.

\bibitem{FGK} C.~F.~Fowler, S.~R.~Garcia, and G. Karaali, Ramanujan sums as supercharacters, {\it Ramanujan J.} {\bf 35} (2014), 205--241.

\bibitem{GYM} R.~Gabrys, E.~Yaakobi, and O.~Milenkovic, Codes in the Damerau distance for deletion and adjacent transposition correction, {\it IEEE Trans. Inform. Theory} {\bf 64} (2018), 2550--2570.

\bibitem{GIN} B.~D.~Ginzburg, A certain number-theoretic function which has an application in coding theory (Russian), {\it Problemy Kibernet.} {\bf 19} (1967), 249--252.

\bibitem{HAG} M.~Hagiwara, A short proof for the multi-deletion error correction property of Helberg codes, {\it IEICE Comm. Express} {\bf 5} (2016), 49--51.

\bibitem{HEFE} A.~S.~J.~Helberg and H.~C.~Ferreira, On multiple insertion/deletion correcting codes, {\it IEEE Trans. Inform. Theory} {\bf 48} (2002), 305--308.

\bibitem{HEKL} T.~Helleseth and T.~Kl{\o}ve, On group-theoretic codes for asymmetric channels, \emph{Inform. Contr.} {\bf 49} (1981), 1--9.

\bibitem{JAWILL} D.~Jacobson and K.~S.~Williams, On the number of distinguished representations of a group element, {\it Duke Math. J.} {\bf 39} (1972), 521--527.

\bibitem{KL} T.~Kl{\o}ve, Error correcting codes for the asymmetric channel, Tech. Rep., Department of Informatics, University of Bergen, Norway, 1995.

\bibitem{KLU} J.~C.~Kluyver, Some formulae concerning the integers less than $n$ and prime to $n$, In
{\it Proc. R. Neth. Acad. Arts Sci. (KNAW)} {\bf 9} (1906), 408--414.

\bibitem{KUKI} A.~A.~Kulkarni and N.~Kiyavash, Nonasymptotic upper bounds for deletion correcting codes, {\it IEEE Trans. Inform. Theory} {\bf 59} (2013), 5115--5130.

\bibitem{LENG} T.~A.~Le and H.~D.~Nguyen, New multiple insertion/deletion correcting codes for non-binary alphabets, {\it IEEE Trans. Inform. Theory} {\bf 62} (2016), 2682--2693.

\bibitem{LEH2} D.~N.~Lehmer, Certain theorems in the theory of quadratic residues, {\it Amer. Math. Monthly} {\bf 20} (1913), 151--157.

\bibitem{LSWY} A.~Lenz, P.~H.~Siegel, A.~Wachter-Zeh, and E.~Yaakobi, Coding over sets for DNA storage, \emph{ISIT 2018}, 2411--2415.

\bibitem{LEV1} V.~I.~Levenshtein, Binary codes capable of correcting deletions, insertions, and reversals (in Russian), {\it Dokl. Akad. Nauk SSSR} {\bf 163} (1965), 845--848. English translation in {\it Soviet Physics Dokl.} {\bf 10} (1966), 707--710.

\bibitem{MS} B.~Mans and I.~Shparlinski, Random walks, bisections and gossiping in circulant graphs, {\it Algorithmica} {\bf 70} (2014), 301--325.

\bibitem{MCRO} R.~J.~McEliece and E.~R.~Rodemich, The Constantin--Rao construction for binary asymmetric error-correcting codes, \emph{Inform. Contr.} {\bf 44} (1980), 187--196.

\bibitem{MENE} A.~Mednykh and R.~Nedela, Enumeration of unrooted maps of a given genus, {\it J. Combin. Theory Ser. B} {\bf 96} (2006), 706--729.

\bibitem{MEBT} H.~Mercier, V.~K.~Bhargava, and V.~Tarokh, A survey of error-correcting codes for channels with symbol synchronization errors, {\it IEEE Commun. Surv. Tutor.} {\bf 12} (2010), 87--96.

\bibitem{MITZ} M.~Mitzenmacher, A survey of results for deletion channels and related synchronization channels, {\it Probab. Surv.} {\bf 6} (2009), 1--33. 

\bibitem{MOVA} H.~L.~Montgomery and R.~C.~Vaughan, \textit{Multiplicative Number Theory I: Classical Theory}, Cambridge University Press, (2006).

\bibitem{NAT} M.~B.~Nathanson, \textit{Additive Number Theory: The Classical Bases}, Springer-Verlag, (1996).

\bibitem{PMS} M.~Planat, M.~Minarovjech, and M.~Saniga, Ramanujan sums analysis of long-period sequences and $1/f$ noise, {\it Europhys. Lett. EPL} {\bf 85} (2009), 40005.

\bibitem{RAM} S.~Ramanujan, On certain trigonometric sums and their applications in the theory of numbers, {\it Trans. Cambridge Philos. Soc.} {\bf 22} (1918), 259--276.

\bibitem{SWGY} C.~Schoeny, A.~Wachter-Zeh, R.~Gabrys, and E.~Yaakobi, Codes correcting a burst of deletions or insertions, {\it IEEE Trans. Inform. Theory} {\bf 63} (2017), 1971--1985.

\bibitem{SCS} P.~Scholl and N.~Smart, Improved key generation for Gentry's fully homomorphic encryption scheme, {\it Cryptogr. Coding}, LNCS {\bf 7089} (2011), 10--22.

\bibitem{SLO} N.~J.~A.~Sloane, On single-deletion-correcting codes, In \textit{Codes and Designs}, Ohio State University, May 2000 (Ray-Chaudhuri Festschrift), K.~T.~Arasu and A.~Seress (editors), Walter de Gruyter, Berlin, 2002, pp. 273--291.

\bibitem{STAN} R.~P.~Stanley, \textit{Enumerative Combinatorics}, Vol. 1, 2nd ed., Cambridge University Press, (2012).

\bibitem{STYO} R.~P.~Stanley and M.~F.~Yoder, A study of Varshamov codes for asymmetric channels, Jet Propulsion Laboratory, Technical Report 32-1526, Vol. XIV (1973), 117--123.

\bibitem{VAI1} P.~P.~Vaidyanathan, Ramanujan sums in the context of signal processing--Part I: Fundamentals, {\it IEEE Trans. Signal Process.} {\bf 62} (2014), 4145--4157.

\bibitem{VAI2} P.~P.~Vaidyanathan, Ramanujan sums in the context of signal processing--Part II: FIR representations and applications, {\it IEEE Trans. Signal Process.} {\bf 62} (2014), 4158--4172.

\bibitem{VATE} R.~R.~Varshamov and G.~M.~Tenengolts, A code which corrects single asymmetric errors (in Russian), {\it Avtomat. i Telemeh.} {\bf 26} (1965), 288--292. English translation in {\it Automat. Remote Control} {\bf 26} (1965), 286--290.

\bibitem{VAU} R.~C.~Vaughan, \textit{The Hardy-Littlewood Method}, second edition, Cambridge University Press, (1997).

\bibitem{RAMJI} R.~Venkataramanan, V.~N.~Swamy, and K.~Ramchandran, Low-complexity interactive algorithms for synchronization from deletions, insertions, and substitutions, \emph{IEEE Trans. Inform. Theory} {\bf 61} (2015), 5670--5689.

\bibitem{LARA} S.~M.~S.~Tabatabaei Yazdi and L.~Dolecek, A deterministic polynomial-time protocol for synchronizing from deletions, \emph{IEEE Trans. Inform. Theory} {\bf 60} (2014), 397--409.

\end{thebibliography}
\end{document}